\pdfoutput=1
\RequirePackage{ifpdf}
\ifpdf 
\documentclass[pdftex]{sigma}
\else
\documentclass{sigma}
\fi

\numberwithin{equation}{section}

\newtheorem{Theorem}{Theorem}[section]
\newtheorem{Corollary}[Theorem]{Corollary}
\newtheorem{Lemma}[Theorem]{Lemma}

 { \theoremstyle{definition}
\newtheorem{Definition}[Theorem]{Definition}

\newtheorem{Remark}[Theorem]{Remark}
\newtheorem{delta-assume}[Theorem]{Assumption}

}

\def\bdelta{\relax W^*} \def\bphi{\boldsymbol{\phi}}
\def\Wd{\relax W} \def\HN{\mathcal{H}^{(N)}} \def\MN{\C_{N\times N}}
\def\CN{\C^N} \def\Dist{\mathbb{D}} \def\C{\mathbb{C}}
\def\N{\mathbb{N}} \def\Z{\mathbb{Z}} \def\Trans#1{\top_{#1}}
\def\Cent{\mathcal{C}} \def\Weyl{\mathcal{W}}
\def\WeylFlat{\mathcal{W}^{\flat}} \def\A{\mathcal{A}}
\def\Aflat{\mathcal{A}^{\flat}} \def\Supp{S} \def\bisp{\flat}
\def\t{\textbf{t}} \def\L{\mathcal{L}} \def\ad{\textup{ad}}
\def\piQK{\hat\pi}

\begin{document}

\allowdisplaybreaks

\newcommand{\arXivNumber}{1505.02833}

\renewcommand{\PaperNumber}{087}

\FirstPageHeading

\ShortArticleName{Bispectrality of $N$-Component KP Wave Functions: A~Study in Non-Commutativity}

\ArticleName{Bispectrality of $\boldsymbol{N}$-Component KP Wave Functions:\\ A~Study in Non-Commutativity}

\Author{Alex KASMAN}

\AuthorNameForHeading{A.~Kasman}

\Address{Department of Mathematics, College of Charleston, USA}
\Email{\href{mailto:kasmana@cofc.edu}{kasmana@cofc.edu}}
\URLaddress{\url{http://kasmana.people.cofc.edu}}

\ArticleDates{Received May 13, 2015, in f\/inal form October 28, 2015; Published online November 01, 2015}

\Abstract{A wave function of the $N$-component KP Hierarchy with
continuous f\/lows determined by an invertible matrix $H$ is
constructed from the choice of an $MN$-dimensional space of
f\/initely-supported vector distributions.  This wave function is shown to
be an eigenfunction for a ring of matrix dif\/ferential operators in $x$
having eigenvalues that are matrix functions of the spectral parameter
$z$.  If the space of distributions is invariant under left
multiplication by $H$, then a matrix coef\/f\/icient
dif\/ferential-translation operator in $z$ is shown to share this
eigenfunction and have an eigenvalue that is a matrix function of~$x$.
This paper not only generates new examples of bispectral operators, it
also explores the consequences of non-commutativity for techniques and
objects used in previous investigations.}

\Keywords{bispectrality; multi-component KP hierarchy; Darboux transformations; non-commutative solitons}

\Classification{34L05; 16S32; 37K10}

\section{Introduction}

The ``bispectral problem'' seeks to identify linear operators $L$ acting on functions of the variable~$x$ and~$\Lambda$ acting on functions of the variable $z$
such that there exists an eigenfunction $\psi(x,z)$ satisfying the equations
\begin{gather*}
L\psi=p(z)\psi\qquad\mbox{and}\qquad \Lambda\psi=\pi(x)\psi.
\end{gather*}
In other words, the components of the \textit{bispectral triple} $(L,\Lambda,\psi(x,z))$ satisfy two dif\/ferent eigenvalue equations, but with the roles of the spacial and spectral variables exchanged.

The search for bispectral triples
 was originally
formulated and investigated by Duistermaat and Gr\"unbaum~\cite{DG} in a paper which
completely resolved the question in the special case in which the
operators were scalar dif\/ferential operators with one being a
Schr\"odinger operator.  Since then, the bispectrality of many dif\/ferent sorts of
operators have been considered and many
connections to dif\/ferent areas of math and physics have also been
discovered.  (See \cite{BispBook} and the articles referenced therein.)

The present paper will be considering a type of bispectrality in which both the operators and eigenvalues behave dif\/ferently when acting from the left than from the right.  It is necessary to introduce some notation and reorder the terms in the eigenvalue equations in order to properly describe the main results.

Throughout the paper,
$M$ and $N$ should be considered to be  f\/ixed (but arbitrary) natural
numbers. In addition,
$H$ is a f\/ixed (but arbitrary) invertible constant $N\times N$ matrix.
Let $\Delta_{n,\lambda}$ denote the linear functional acting from the
right on functions of $z$ by dif\/ferentiating $n$ times and evaluating at
$z=\lambda$:
\begin{gather*}
 (f(z))\Delta_{n,\lambda}=f^{(n)}(\lambda).
 \end{gather*}
  The set of
linear combinations of these f\/initely-supported distributions with
coef\/f\/icients from~$\CN$ will be denoted~$\Dist$:
\begin{gather*}
\Dist=\left\{\sum_{j=1}^{m}\Delta_{n_j,\lambda_j}C_j\colon
m,(n_j+1)\in\N,\ \lambda_j\in\C,\ C_j\in\CN\right\}. 
\end{gather*}
Finally, let $\bdelta\subset\Dist$ be an $MN$-dimensional space of distributions.

The
goal of this paper is to produce from the selection of $H$ and~$\bdelta$ a \relax{bispectral
triple} $(L,\Lambda,\psi)$ such that
\begin{itemize}\itemsep=0pt

\item
$\psi(x,z)=(I+O(z^{-1}))e^{xzH}$ is an $N\times N$ matrix function of $x$ and $z$ with the specif\/ied asymptotics in $z$,

\item $\psi(x,z)(zH)^M$ is
holomorphic in $z$ and in the kernel of every element of $\bdelta$ (i.e.,
$\psi$ ``satisf\/ies the conditions''),

\item
$L\psi(x,z)=\psi(x,z) p(z)$ for the matrix dif\/ferential operator in $x$ acting from the left and some matrix function $p(z)$,
\item and $\psi(x,z)\Lambda=\pi(x)\psi(x,z)$ for a matrix dif\/ferential-translation operator in $z$ acting from the right and some matrix function $\pi(x)$.
\end{itemize}
In the case $N=1$, this goal is already achieved constructively   for
\textit{any} choice of distribu\-tions~\cite{BispSol}.  One interesting
result of the present paper is that for $N>1$, lack of commutativity
with~$H$ may impose an obstacle to f\/inding  such a bispectral triple in
that no such triple exists for certain choices of~$\bdelta$. The three
subsections below each of\/fer some motivation for interest in the existence of such triples.

\subsection{New bispectral triples}

One source of interest in the present paper  is simply the fact that it
generates  examples of bi\-spectral triples that have not previously been studied.  The construction outlined
below produces many bispectral triples $(L,\Lambda,\psi)$, where~$L$ is a
matrix coef\/f\/icient dif\/ferential operator in~$x$, $\Lambda$~is an
operator in~$z$ which acts by both dif\/ferentiation and
\textit{translation} in~$z$, and $\psi(x,z)$ is a matrix function which
asymptotically approaches $e^{xzH}$ for a chosen invertible matrix~$H$.

This can be seen either as a matrix generalization of the paper~\cite{BispSol} which considered exactly this sort of bispectrality in
the scalar case or as a generalization of the matrix bispectrality in~\cite{BGK,BL,WilsonNotes} to the a more general class of eigenfunctions and
operators.

\subsection{Bispectral duality of integrable particle systems}

Among the applications found for bispectrality is its surprising role in
the duality of integrable particle systems.  Two integrable Hamiltonian
systems are said to be ``dual'' in the sense of Ruijse\-naars if the
action-angle maps linearizing one system is simply the inverse of the
action-angle map of the other~\cite{Ruijsenaars}.  When the Hamiltonians
of the systems are quantized, the Hamiltonian operators themselves share
a common eigenfunction and form a bispectral triple~\cite{FGNR}.

Moreover, the duality of the classical particle systems can also be
manifested through bi\-spectrality in that the dynamics of the two
operators in a bispectral triple under some integrable hierarchy can be
seen to display the particle motion of the two dual systems
respectively.  This classical bispectral duality was observed f\/irst in
the case of the self-duality of the Calogero--Moser system~\cite{cmbis,Wilson2}.  In \cite{BispSol}, it was conjectured that
certain bispectral triples involving scalar operators that translate in~$z$ would similarly be related to the duality of the rational
Ruijsenaars--Schneider and hyperbolic Calogero--Moser particle systems.
This was later conf\/irmed by Haine~\cite{Haine}.

The spin generalization of the Calogero--Moser system similarly exhibits
classical bispectral duality~\cite{BGK,WilsonNotes}.  Achieving this result
essentially involved generalizing the scalar case~\cite{cmbis,Wilson2}
to  bispectrality for matrix coef\/f\/icient dif\/ferential operators. It is
hoped that the construction presented in this paper which generalize
that in \cite{BispSol} will similarly f\/ind application to classical
bispectral duality in some future matrix generalization of the results
in~\cite{Haine}.

\subsection{Non-commutative bispectral Darboux transformations}

In the original context of operators on scalar functions, the eigenvalue equations def\/ining bi\-spectrality were originally written with all operators and eigenvalues acting from the left.  However, research into non-commutative versions of the discrete-continuous version of the bispectral problem in non-commutative contexts found it necessary to have operators in dif\/ferent variables acting from opposite sides in order to ensure that they commute \cite{matrix1,matrix3, GPT1,GPT2,GPT3,GPT4} (cf.~\cite{Duran}).
Continuous-continuous bispectrality in the non-commutative context is also a subject of interest \cite{matrix2,matrix3b,matrix4,matrix5}.  As in the discrete case, it was found that the generalizing the results from the scalar case to the matrix case required letting the operators act from opposite sites \cite{BGK,BL,WilsonNotes}.
Building on this observation, the present paper
seeks to further consider the inf\/luence of non-commutativity on
constructions and results already known for scalar bispectral triples.

In this regard, the wave functions of the $N$-component KP hierarchy are
of interest since they asymptotically look like matrices of the form
$e^{xzH}$, where $H$ is an $N\times N$ matrix \cite{BtK,DJKM}.
Consequently, unlike the scalar case or the case $H=I$ considered in
\cite{BGK,WilsonNotes}, the vacuum eigenfunction itself may not commute with the
coef\/f\/icients of the operators. In fact, for the purpose of more fully investigating the consequences of non-commutativity for bispectral Darboux transformations, this paper will go beyond the standard formalism for the $N$-component KP hierarchy by considering the case in which $H$ is not even diagonalizable and therefore has a centralizer with more interesting structure.  Furthermore, following the suggestion of
Gr\"unbaum \cite{matrix3b}, the present paper will consider the case in
which both of the \textit{eigenvalues} are matrix-valued.

By generalizing the construction from \cite{BispSol} to the
context in which the vacuum eigenfunction, eigenvalues, and operator
coef\/f\/icients all generally fail to commute with each other, this
investigation has identif\/ied some results that are surprisingly dif\/ferent than in the commutative case.   For example,
 it is shown that in this context there exist rational Darboux transformations that do \textit{not} preserve the bispectrality of the eigenfunction (see Section~\ref{nogo}) and that bispectral triples do not always exhibit ad-nilpotency (see Remark~\ref{rem:ad-examp}).
These will be summarized in the last section of the paper.

\section{Additional notation}

\subsection{Distributions and matrices}

Let $M$, $N$, $H$ and $\bdelta$ be as in the Introduction.
The set of constant $N$-component column vectors will be denoted by $\CN$
and $\MN$ is  the set of $N\times N$ constant matrices.
Associated to the selection of $H$ one has
\begin{gather*}
\Cent= \{Q\in\MN\colon  [Q,H]=0 \},
\end{gather*}  the
centralizer of $H$ in $\MN$.

Let
 $\{\delta_1,\ldots,\delta_{MN}\}$ be a basis for $\bdelta\subset\Dist$.
Unlike the selection of
$N$, $M$ and $H$ which were indeed entirely arbitrary, two additional
assumptions regarding the choice of $\bdelta$ will have to be made so
that a~bispectral triple may be produced from it.  However, rather than
making those assumptions here at the start, the additional assumptions
will be introduced only when they become necessary.  This should help to
clarify which results are independent of and which rely on the
assumptions.

Nearly all of the objects and constructions below depend on the choice
of the number $N$, the matrix $H$ and the distributions $\bdelta$  that
have been selected and f\/ixed above, but to avoid complicating the
notation the dependence on these selections will not be written
explicitly.  (For instance, the matrix $\Phi$ in \eqref{eqn:Phi} could
be called $\Phi_{N,M,H,\bdelta}$ because it does depend on these
selections, but it will simply be called~$\Phi$.)

\subsection{Operators and eigenvalues}

The operators in $x$ to be considered in this paper will all be
dif\/ferential operators in the variable~$x$ (also sometimes called~$t_1$)
which are polynomials in $\partial=\frac{\partial}{\partial x}$  having
coef\/f\/icients that are $N\times N$ matrix functions of~$x$.  The
operators in~$z$ will be written in terms of
$\partial_z=\frac{\partial}{\partial z}$ or the translation operator
$\Trans{\alpha}\colon f(z)\mapsto f(z+\alpha)$.  More generally, they will be
polynomials in these having coef\/f\/icients that are $N\times N$ matrix
rational functions of~$z$.

Because operator coef\/f\/icients and eigenvalues will be matrix-valued, the
action of an operator will depend on whether its coef\/f\/icients multiply
from the right or the left.  It also matters whether  the eigenvalue
acts by multiplication on  the right or the left of the eigenfunction.
This paper will adopt the convention that all operators in $x$ that are
independent of $z$ (whether they are dif\/ferential operators or simply
functions acting by multiplication) act from the left and that all
operators in $z$ that are independent of $x$ (including functions,
translation operators, f\/initely-supported distributions and dif\/ferential
operators) act from the right. The action of an operator in $z$ will be
denoted simply by writing the operator to the right of the function it
is acting on.  So, for instance, the function $e^{xzH}$ satisf\/ies the
eigenvalue equations
\begin{gather*}
 \partial e^{xzH}=e^{xzH}
(zH)\qquad\text{and}\qquad e^{xzH}\partial_z=(xH)e^{xzH}.
\end{gather*}

The decision to have operators in $x$ and $z$ acting from dif\/ferent
sides is not merely a matter of notation.  The need for such an
assumption for the dif\/ferential operators in $x$ and $z$ respectively
was already noted in prior work on matrix bispectrality \cite{BGK,BL,matrix3,GPT1,GPT2,GPT3,GPT4,WilsonNotes}.
The present work extends this convention to the eigenvalues and
f\/initely-supported distributions as well, and does so because the
theorems fail to be true otherwise.

\begin{Remark}\label{rem:eigenchange} Note that one needs to be cautious
about applying intuition about eigenfunctions in a commutative setting
without considering how non-commutativity may af\/fect it.   For example,
although a non-zero multiple of an eigenfunction in the commutative
setting always remains an eigenfunction with the same eigenvalue, here
there are two other possibilities.  Suppose $L\psi(x,z)=\psi(x,z)p(z)$,
so that $\psi$ is an eigenfunction for~$L$ with eigenvalue~$p$, and that~$g$ is an invertible constant matrix.  Then $g\psi$ may not be an
eigenfunction for $L$ if $[L,g]\not=0$ and more surprisingly
even though $\psi g$ is an eigenfunction for~$L$, the corresponding eigenvalue
changes to~$g^{-1}pg$.
\end{Remark}

\section[Dual construction for $N$-component KP]{Dual construction for $\boldsymbol{N}$-component KP}

The purpose of this section is to produce a matrix coef\/f\/icient pseudo-dif\/ferential operator satisfying the Lax equations of the multicomponent KP hierarchy  introduced by Date--Jimbo--Kashiwara--Miwa \cite{DJKM} and mostly follows the approach of Segal--Wilson \cite{SW}.  The proof methods utilized here are rather standard in the f\/ield of integrable systems.  However, one of the main points of this paper is that some of the novel features of this situation pose unexpected obstac\-les to the standard methods used to study bispectrality.  So, although it is not surprising that the operators produced in this way satisfy these Lax equations, the proofs are presented with suf\/f\/icient detail to ensure that they work despite the non-diagonalizability of~$H$ and the fact that the distributions here are acting from the right.

\subsection[The $N$-component Sato Grassmannian]{The $\boldsymbol{N}$-component Sato Grassmannian}

Let $\HN$ denote the Hilbert space of square-integrable vector-valued
functions $S^{1}\to(\C^N)^{\top}$ where $S^{1}\subset\C$ is the unit
circle $|z|=1$ and $(\C^N)^{\top}$ is the set of complex valued row\footnote{In previous papers the elements of
$\HN$ have been written as column vectors.  However, because the
construction of bispectral operators below is most easily described in
terms of matrix f\/initely-supported distributions in~$z$ acting from the
\textit{right}, they will be written here as $1\times N$ matrices.} $N$-vectors.  Denote by $e_i$ for $0\leq i\leq N-1$ the $1\times N$
matrix which has
the value $1$ in column $i+1$ and zero in the others.  This extends to a
basis $\{e_i\colon  i\in\Z\}$ of $\HN$ for which $e_i=z^ae_b$ when $i=aN+b$
for $0\leq b\leq N-1$. The Hilbert space has the decomposition
\begin{gather*}
\HN=\HN_+\oplus \HN_-,
\end{gather*} where $\HN_+$ is the Hilbert closure of the
subspace spanned by $e_i$ for $0\leq i$ and $\HN_-$ is the Hilbert
closure of the subspace spanned by $e_i$ for $i<0$.

\begin{Definition}The Grassmannian ${\rm Gr}^{(N)}$ is set of all closed
subspaces $V\subset \HN$ such that the orthogonal projections $V\to
\HN_-$ is a compact operator and such that the orthogonal projection
$V\to \HN_+$ is Fredholm of index zero~\cite{BtK,Sato,SW}.
\end{Definition}

The notion of $N$-component KP hierarchy to be considered in this paper
is compatible with, but somewhat dif\/ferent from that addressed by
previous authors as the following remark explains.

\begin{Remark} For the $N$-component KP hierarchy, the construction of
solutions from a point in the Grassmannian usually involves a collection
of diagonal constant matrices~$H_{\alpha}$ ($1\leq \alpha\leq N$) such that powers of $zH_{\alpha}$ inf\/initesimally generate the continuous f\/lows   and $z$-dependent
matrices~$T_{\beta}$ ($1\leq \beta\leq N-1$) that generate discrete
f\/lows (sometimes called ``Schlesinger transformations'') of the
hierarchy \cite{BtK,DJKM}. In the present paper, however, only the
continuous f\/lows generated inf\/initesimally by powers of $z$ times the (not
necessarily diagonal) matrix $H$ selected earlier will be considered.
\end{Remark}

\subsection[A point of ${\rm Gr}^{(N)}$ associated to the selection of distributions]{A point of $\boldsymbol{{\rm Gr}^{(N)}}$ associated to the selection of distributions}

As usual, one associates a subspace of $\HN$ to the choice of $\bdelta$
by taking its dual in $\HN_+$ and multiplying on the right by the
inverse of a matrix polynomial in $z$ whose degree depends on the
dimension of $\bdelta$ (cf.\ \cite{cmbis,nKdV2KP,SW,Wilson} where the
analogous procedure involved dividing by a~scalar polynomial):
\begin{Definition}\label{def:Wd} Let $\Wd\subset \HN$ be def\/ined by
\begin{gather*}
\Wd=\big\{p(z)(zH)^{-M}\colon p(z)\in \HN_+,\ (p)\delta=0\ \text{for} \ \delta\in\bdelta\big\}.
\end{gather*} \end{Definition}
\begin{Lemma} $\Wd\in {\rm Gr}^{(N)}$.
\end{Lemma}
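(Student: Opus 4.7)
The plan is to realize $\Wd$ as a codimension-$MN$ subspace of the intermediate space $W_1 := \HN_+ \cdot (zH)^{-M}$ and to transfer the Fredholm computation from $W_1$ down to $\Wd$. As a preliminary, right multiplication by $(zH)^{-M}$ is a bounded invertible operator on $\HN$ since $z^{-M}$ is unitary on $S^1$ and $H^{-M}$ is an invertible constant matrix. Because right multiplication by $H^{-M}$ preserves each finite-dimensional space $z^{a}(\C^N)^{\top}$, it restricts to a bijection $\HN_+ \to \HN_+$, so $W_1 = \HN_+ \cdot z^{-M}$. Inspecting the basis, $W_1 = \HN_+ \oplus F$ where $F = \mathrm{span}\{e_i : -MN \leq i \leq -1\}$ has dimension $MN$. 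In particular $\pi_+|_{W_1}$ is surjective with kernel $F$ (hence Fredholm of index $MN$), and $\pi_-|_{W_1}$ has finite-rank image inside $F$.

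Next I would verify that $V_0 := \{p \in \HN_+ : (p)\delta = 0 \text{ for all } \delta \in \bdelta\}$ is closed of codimension exactly $MN$ in $\HN_+$. Each basis element $\delta_i = \sum_j \Delta_{n_j,\lambda_j} C_j$ defines the functional $p \mapsto \sum_j p^{(n_j)}(\lambda_j) C_j$, which is a finite combination of derivative evaluations at points inside the disk of definition and hence continuous on $\HN_+$. The induced functionals remain linearly independent in $(\HN_+)^*$ because, for any finite collection of support points, one can exhibit polynomials in $\HN_+$ realizing any prescribed jet data, producing a polynomial that isolates any one functional and shows the restriction map $\Dist \to (\HN_+)^*$ is injective on $\bdelta$. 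Thus the evaluation map $\mathrm{ev}\colon \HN_+ \to \C^{MN}$ is continuous and surjective with $\ker(\mathrm{ev}) = V_0$.

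Finally, since right multiplication by the bounded invertible $(zH)^{-M}$ is a homeomorphism of $\HN$, the space $\Wd = V_0 \cdot (zH)^{-M}$ is a closed subspace of $W_1$ of codimension $MN$. Its projection to $\HN_-$ factors through the finite-dimensional $F$ and is therefore compact. Restricting the index-$MN$ Fredholm operator $\pi_+|_{W_1}$ to the codimension-$MN$ subspace $\Wd$ drops the index by $MN$, yielding $\mathrm{ind}(\pi_+|_{\Wd}) = 0$, as required. I expect the main subtle point in the execution to be the argument for codimension \emph{exactly} $MN$: any failure of linear independence of the induced functionals would leave a positive residual index, so one must be careful that the implicit conventions on the support data of $\bdelta$ (namely that the $\lambda_j$ lie where evaluation on $\HN_+$ is well defined, i.e., inside the unit disk) are in force throughout.
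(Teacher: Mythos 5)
Your proof is correct and follows essentially the same route as the paper: compactness comes from the $\HN_-$-projection of $\Wd$ landing in the fixed $MN$-dimensional span of $e_{-MN},\dots,e_{-1}$, and the Fredholm index vanishes because the $MN$ gained by right-multiplying by $(zH)^{-M}$ cancels against the $MN$ independent conditions imposed by $\bdelta$. You package the index arithmetic as restriction of an index-$MN$ Fredholm map to a codimension-$MN$ closed subspace rather than as a composition of two Fredholm maps, and you spell out the linear-independence of the induced functionals on $\HN_+$ which the paper simply asserts; both are harmless elaborations of the same argument.
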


\begin{proof}
The image of $\Wd$ under the projection map onto $\HN_-$ is contained in the f\/inite-di\-men\-sio\-nal subspace spanned by the basis elements $e_i$ for $-MN\leq i\leq -1$.  This is suf\/f\/icient to conclude that the projection map is compact.  The map $w\mapsto w(zH)^M$ from $W$ to $\HN_+$ has Fredholm index $MN$ because it has no kernel and the image is the common solution set of $MN$-linearly independent conditions.  The map from $\HN_+$ which f\/irst right multiplies by $(zH)^{-M}$ and then projects onto $\HN_+$ has index $-MN$ since its kernel is spanned by the basis vec\-tors~$e_i$ with $0\leq i\leq MN-1$ but the image is all of $\HN_+$.  The composition of these maps is the projection from $\Wd$ to $\HN_+$ and so its index is the sum of the indices which is zero.\end{proof}

\subsection[$N$-component KP wave function]{$\boldsymbol{N}$-component KP wave function}

\begin{Definition} Let $\psi_0=\exp\big(\sum\limits_{i=1}^{\infty} t_iz^iH^i\big)$
where $\t=(t_1,t_2,\ldots)$ are the continuous KP time variables, with
the variables $t_1$ and $x$ considered to be identical. Let
$\phi_i(\t)=(\psi_0)\delta_i$ ($1\leq i\leq MN$) be the $\CN$-valued
functions obtained by applying each element of the basis of  $\bdelta$
to $\psi_0$.  Combine them as blocks into the $N\times MN$ matrix
$\bphi=(\phi_1\ \cdots\ \phi_{MN})$ and def\/ine the matrix $\Phi$ as the
$MN\times MN$ block Wronskian matrix
\begin{gather}
\Phi(\t)=\left(\begin{matrix}\bphi\\ \dfrac{\partial}{\partial x}\bphi\\
\vdots\\ \dfrac{\partial^{M-1}}{\partial x^{M-1}}\bphi\end{matrix}\right)
= \left(\begin{matrix}\phi_1&\phi_2&\cdots&\phi_{MN}\\
\phi_1'&\phi_2'&\cdots&\phi_{MN}'\\ \vdots & \vdots&\ddots&\vdots&\\
\phi_1^{(M-1)}&\phi_2^{(M-1)}&\cdots&\phi_{MN}^{(M-1)}\end{matrix}\right
).\label{eqn:Phi}
\end{gather}
\end{Definition}

\begin{delta-assume}\label{assumpA} Henceforth, assume that $\bdelta$
was chosen so that the matrix $\Phi$ in \eqref{eqn:Phi} is invertible
for some values of $x=t_1$ (i.e., so that $\det\Phi\not\equiv0$).
\end{delta-assume}

The following remarks of\/fer two dif\/ferent interpretations of the fact
that Assumption~\ref{assumpA} is necessary here but not for the
analogous result in the scalar case \cite{BispSol}.

\begin{Remark} When $N=1$, the requirement that $\det(\Phi)\not=0$ is
equivalent to the requirement that  $\{\phi_1,\ldots,\phi_M\}$ is a
linearly independent set of functions.  Then, the independence of the
basis of distributions  would already ensure that
Assumption~\ref{assumpA} is satisf\/ied.  However, when $N>1$ the
\relax{block} Wronskian matrix~$\Phi$ can be singular even  if the
functions~$\phi_i$ are linearly independent as functions of~$x$.  (For
example, consider the case $M=1$, $N=2$, $\phi_1=(1\ 1)^{\top}$,
$\phi_2=(x\ x)^{\top}$). \end{Remark}

\begin{Remark} The determinant of $\Phi$ can be interpreted as the
determinant of the projection map from $\psi_0^{-1}(\t)\Wd$ to $\HN_+$.
In other words, it is the $\tau$-function of $\Wd$.  (The proof of this
claim is essentially the same as the proof of Theorem~7.5 in
\cite{nKdV2KP}.)  The $\tau$-function is non-zero when
$\psi_0^{-1}(\t)\Wd$ is in the ``big cell'' of the Grassmannian.  In the
case $N=1$, the orbit of any point~$W$ in the Grassmannian under the
action of $\psi_0^{-1}$ intersects the big cell~\cite{SW}.  In contrast,
for the $N$-component KP hierarchy it is known that there are points in
the Grassmannian whose orbit under the continuous f\/lows never intersect
the big cell~\cite{BtK}.
\end{Remark}

\begin{Definition} Due to Assumption~\ref{assumpA}, we may def\/ine the
dif\/ferential operator $K$ as \begin{equation}
K=\partial^MI-\left(\phi_1^{(M)}\ \cdots\
\phi_{MN}^{(M)}\right)\Phi^{-1}\left(\begin{matrix} I\\ \partial I\\
\vdots\\ \partial^{M-1}I\end{matrix}\right).\label{eqn:K} \end{equation}
\end{Definition}

\begin{Lemma}\label{Klemma} \quad
\begin{enumerate}\itemsep=0pt
\item[$(a)$] The operator $K$ defined in
\eqref{eqn:K} is the unique monic $N\times N$ differential operator of
order~$M$ such that\footnote{The expression $D\bphi=0$ is a convenient
way to write that applying the $N\times N$ differential operator $D$ to
each function $\phi_i$ ($1\leq i\leq MN$) results in the zero vector of~$\CN$.}  $K\bphi=0$.
\item[$(b)$] If $L$ is any $N\times N$ matrix differential
operator satisfying $L\bphi=0$  then $L=Q\circ K$ for some differential
operator~$Q$.
\end{enumerate}
\end{Lemma}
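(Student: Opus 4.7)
The plan is to establish part~(a) in two moves---direct verification that $K\bphi=0$, followed by a uniqueness argument based on the invertibility of $\Phi$---and then derive part~(b) from right-Euclidean division by the monic operator~$K$.

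For the verification, observe that the $j$-th column of $\Phi$ is precisely the stacked vector $\bigl(\phi_j,\phi_j',\ldots,\phi_j^{(M-1)}\bigr)^{\top}$, so $\Phi^{-1}$ sends that column to the $j$-th standard basis vector $e_j\in\C^{MN}$. Applying the right-hand side of~\eqref{eqn:K} to $\phi_j$ therefore collapses to $\partial^M\phi_j-\phi_j^{(M)}=0$, and since this holds for each $j$ we obtain $K\bphi=0$.

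For uniqueness, suppose $L$ is another monic $N\times N$ differential operator of order $M$ annihilating $\bphi$. Then $D:=K-L=\sum_{i=0}^{M-1}d_i(x)\partial^i$ has order at most $M-1$, and the condition $D\bphi=0$ is equivalent to the single block-matrix equation $(d_0\ d_1\ \cdots\ d_{M-1})\,\Phi=0$, in which the left factor is the $N\times MN$ block row assembled from the coefficients. By Assumption~\ref{assumpA}, $\Phi$ is invertible on an open set of $t_1$-values, so each $d_i$ must vanish there, and hence identically (the coefficients being smooth on their domain of definition). This gives $L=K$ and completes~(a).

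For part~(b), I would invoke right-Euclidean division in the ring of matrix differential operators: since $K$ is monic, any $L$ of order $n\ge M$ can be reduced by subtracting $(\text{leading coefficient of }L)\,\partial^{n-M}\circ K$ to lower its order, and iterating produces $L=Q\circ K+R$ with $\mathrm{ord}(R)<M$. Combining $L\bphi=0$ with $K\bphi=0$ forces $R\bphi=0$, and the uniqueness step from~(a) (no nonzero operator of order $<M$ annihilates $\bphi$) then yields $R=0$, so $L=Q\circ K$. Non-commutativity enters only in this last step, where right-division in a matrix-coefficient operator ring requires the divisor to have an invertible leading coefficient; here $K$ is outright monic, so the obstacle is avoided from the outset. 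The only genuinely essential hypothesis throughout is Assumption~\ref{assumpA}: without it, both the very definition of $K$ and the annihilation-implies-zero conclusion for low-order operators would collapse.
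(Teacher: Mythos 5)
Your proof is correct, but note that the paper itself does not actually prove Lemma~\ref{Klemma}: part~(a) is declared ``easy to check'' and attributed to the quasi-determinant theory of Etingof--Gelfand--Retakh~\cite{EGR}, while part~(b) is flagged as a new result whose self-contained proof is deferred to the separate note~\cite{MDO-Note}. So you have filled in an argument the paper omits, and the argument you supply is the natural one: (i) the $j$-th column of $\Phi$ is precisely $(\phi_j,\phi_j',\ldots,\phi_j^{(M-1)})^{\top}$, so $\Phi^{-1}$ carries it to the standard basis vector and $K\phi_j$ collapses to $\phi_j^{(M)}-\phi_j^{(M)}=0$; (ii) for an operator $D=\sum_{i<M}d_i\partial^i$ annihilating all $\phi_j$, the block-row identity $(d_0\ \cdots\ d_{M-1})\Phi=0$ together with generic invertibility of $\Phi$ forces every $d_i$ to vanish, giving both uniqueness in~(a) and the ``no nonzero operator of order $<M$ annihilates $\bphi$'' statement needed for~(b); (iii) right Euclidean division by the monic $K$ in the noncommutative operator ring, which needs only an invertible leading coefficient of the divisor, produces $L=Q\circ K+R$ with $\mathrm{ord}\,R<M$, and $R=0$ then follows. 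This matches the route one would expect~\cite{MDO-Note} to take.

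One small point of imprecision: in the uniqueness step you write that the $d_i$ vanish on an open set and ``hence identically (the coefficients being smooth on their domain of definition).'' Smoothness alone does not propagate vanishing from an open set to the whole domain (bump functions). What is actually being used is that $\det\Phi$ is analytic (the $\phi_i$ are exponential polynomials), so the set where $\Phi$ is invertible is dense, and the $d_i$, vanishing on a dense set and being continuous (indeed meromorphic) on their domain, vanish identically. The conclusion is right; the stated justification should invoke analyticity/density rather than mere smoothness.
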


It is easy to check that $K\bphi=0$.  Alternatively,
Lemma~\ref{Klemma}(a) follows from results of Etingof, Gelfand and
Retakh on quasi-determinants \cite{EGR}.   However,
Lemma~\ref{Klemma}(b) is apparently a new result.  Although the lemma
was originally formulated for this paper\footnote{Lemma~\ref{Klemma}(b)
is used  in the proofs of Theorem~\ref{KPLax},
Lemma~\ref{KcommuteswithH} and Theorem~\ref{claimLp}.}, a self-contained
proof is being published separately as \cite{MDO-Note}.

\begin{Remark} The theory of quasi-determinants is not explicitly being
used here but the ope\-rator~$K$ def\/ined in \eqref{eqn:K} could
alternatively be computed as a quasi-determinant of a Wronskian matrix
with $N\times N$ matrix entries \cite{EGR}.  The method of
quasi-determinants was applied to the bispectral problem for matrix
coef\/f\/icient operators in \cite{BL}.  So, in this sense, the use of this
operator $K$ here is a continuation of the approach adopted there.
\end{Remark}

\begin{Definition} Let $ \psi(\t,z)=K(\psi_0)(zH)^{-M}. $  This
function
will play an important role as the eigenfunction
for the  operators in $x$ and (given one additional assumption) $z$ to
be introduced below. \end{Definition}

\begin{Theorem} The function $\psi$ defined above has the following
properties:
\begin{itemize}\itemsep=0pt
 \item $\psi(\t,z)=(I+O(z^{-1}))\psi_0$ where
$I$ is the $N\times N$ identity matrix,
\item $\psi(\t,z)\in W$ for all $\t$ in the domain of $\psi$.
\end{itemize}
Consequently,
$\psi=\psi_{\Wd}$ is the $N$-component KP wave function of the point
$\Wd\in {\rm Gr}^{(N)}$.
\end{Theorem}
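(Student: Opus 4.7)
The plan is to verify the two bulleted properties separately; together they identify $\psi$ as the $N$-component KP wave function of $\Wd$ by the standard Segal--Wilson characterization.

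For the asymptotic, the key inputs are that $K = \partial^M I + \sum_{k<M} a_k(\t)\partial^k$ is monic of order $M$ in $\partial$, that $\partial^k \psi_0 = \psi_0(zH)^k$ (since $H$ commutes with itself, so the exponential differentiates cleanly on the right), and that $\psi_0$ commutes with every power of $zH$ (being a formal series in matrix powers of $zH$). Combining these immediately rewrites
\begin{gather*}
K(\psi_0)(zH)^{-M} = \psi_0 + \sum_{k=0}^{M-1} a_k(\t)(zH)^{k-M}\psi_0,
\end{gather*}
and each $k<M$ term is $O(z^{-1})$, yielding $\psi = (I + O(z^{-1}))\psi_0$.

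For the membership $\psi \in \Wd$, I would work with $p(\t,z) := \psi(\t,z)(zH)^M = K\psi_0$ and check the two conditions in Definition~\ref{def:Wd}. Holomorphicity of $p$ in $z$ (hence $p\in \HN_+$ after restriction to $S^1$, row by row) is clear because $K$ is a differential operator in $x$ with coefficients independent of $z$, applied to the $z$-entire matrix function $\psi_0$. The vanishing condition $(p)\delta_i = 0$ for each basis element $\delta_i$ relies on the left/right convention emphasized in Section~2: since $K$ acts entirely in $x$ from the left and $\delta_i$ acts entirely in $z$ from the right, the two actions on $\psi_0(\t,z)$ commute, so
\begin{gather*}
(K\psi_0)\delta_i = K\bigl((\psi_0)\delta_i\bigr) = K\phi_i,
\end{gather*}
which is the $i$th block-column of $K\bphi$ and vanishes by Lemma~\ref{Klemma}(a).

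The argument is short and the only real obstacle is keeping the non-commutative bookkeeping straight: one has to record explicitly that $\psi_0$ commutes with $(zH)^{-M}$ (true because $\psi_0$ is built from powers of $H$ alone, unlike the generic matrix coefficients $a_k(\t)$ which need not commute with $H$), and that the right action of the $z$-distribution $\delta_i$ genuinely slips past the left action of the $x$-operator $K$ even though the intermediate function $\psi_0$ is matrix-valued. Once these left/right observations are in place, no calculation beyond the two displays above is needed, and the concluding sentence of the theorem is just the definition of the wave function $\psi_{\Wd}$.
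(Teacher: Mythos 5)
Your proof is correct and follows essentially the same two-step argument as the paper: first showing that $K\psi_0 = P(\t,z)\psi_0$ for a polynomial $P$ of degree $M$ with leading coefficient $H^M$ (so that right-multiplication by $(zH)^{-M}$ normalizes the leading term to $I$), and then verifying membership in $\Wd$ by commuting the left-acting $x$-operator $K$ past the right-acting distribution $\delta_i$ to reduce to $K\phi_i=0$. Your explicit flagging of the fact that $\psi_0$ commutes with powers of $zH$ (while the coefficients $a_k(\t)$ need not commute with $H$) is a point the paper uses silently when writing $K\psi_0=P(\t,z)\psi_0$, so your bookkeeping is a harmless elaboration rather than a different route.
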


\begin{proof} Because $\partial(\psi_0)=\psi_0zH$, applying the monic
dif\/ferential operator $K=\partial^M+\cdots$ to $\psi_0$ produces a
function of the form $K\psi_0=P(\t,z)\psi_0$ where $P$ is a polynomial
of degree $M$ in $z$ with leading coef\/f\/icient $H^M$.  Then
$K\psi_0H^{-M}z^{-M}=(I+O(z^{-1}))\psi_0$ as claimed. It remains to be
shown that $\psi$ is an element of $\Wd$.  By Def\/inition~\ref{def:Wd} it
is suf\/f\/icient to note that for each $1\leq i\leq MN$, $\psi
z^MH^M=K\psi_0$ satisf\/ies
\begin{gather*}
(K\psi_0)\delta_i=K ((\psi_0)\delta_i )=K\phi_i=0.\tag*{\qed}
\end{gather*}
\renewcommand{\qed}{}
\end{proof}

\subsection[Lax equations of the $N$-component KP hierarchy]{Lax equations of the $\boldsymbol{N}$-component KP hierarchy}

\begin{Definition} Let $K^{-1}$ denote the unique multiplicative inverse of the monic dif\/ferential operator $K$ in the ring of matrix-coef\/f\/icient pseudo-dif\/ferential operators and let $\L= K \circ \partial \circ  K^{-1}$ be the
pseudo-dif\/ferential operator obtained by conjugating
$\partial$ by $K$.
\end{Definition}

\begin{Theorem}\label{KPLax} $\L$  satisfies the Lax equations
\begin{gather*}
\frac{\partial}{\partial t_i}\L=[(\L^i)_+,\L]
\end{gather*}
 for each $i\in\N$,
where $\big(\sum\limits_{i=-\infty}^n \alpha_i(\t)\partial^i\big)_+=\sum\limits_{i=0}^n
\alpha_i(\t)\partial^i$.
\end{Theorem}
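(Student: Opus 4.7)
The plan is to reduce the Lax equations to a single operator identity for $K$ and then verify that identity using Lemma~\ref{Klemma}(b). Set $B_i = (\L^i)_+$ and $R_i = (\L^i)_-$, so $\L^i = B_i + R_i$. Since $[\L^i,\L]=0$, one has $[B_i,\L] = -[R_i,\L]$; on the other hand, differentiating $\L = K\circ\partial\circ K^{-1}$ in $t_i$ (using $\partial_{t_i}K^{-1} = -K^{-1}(\partial_{t_i}K)K^{-1}$) gives
\begin{gather*}
\frac{\partial\L}{\partial t_i} = \bigl[(\partial_{t_i}K)K^{-1},\,\L\bigr].
\end{gather*}
Matching these expressions, it suffices to show $(\partial_{t_i}K)K^{-1} = B_i - \L^i = -R_i$, which after right multiplication by $K$ becomes the differential operator identity
\begin{gather*}
\partial_{t_i}K \;=\; B_iK - K\partial^i.
\end{gather*}

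Both sides of this identity are differential operators of order at most $M-1$: the right side equals $-R_iK$, a product of a strictly negative-order pseudo-differential operator with a degree-$M$ differential operator, while the left side has its leading term killed because $K$ is monic with constant identity top coefficient. I would prove equality by evaluating both sides on each column $\phi_j$ of $\bphi$. The right side gives $-K\partial^i\phi_j$ since $K\bphi=0$. For the left side, differentiating $K\phi_j=0$ with respect to $t_i$ yields $(\partial_{t_i}K)\phi_j = -K(\partial_{t_i}\phi_j)$; moreover $\phi_j = (\psi_0)\delta_j$, and since $\delta_j$ acts only in the $z$-variable it commutes with $\partial$ and $\partial_{t_i}$, so combined with $\partial_{t_i}\psi_0 = z^iH^i\psi_0 = \partial^i\psi_0$ (valid because all matrix factors appearing in $\psi_0$ are powers of $H$ and hence mutually commute) one obtains $\partial_{t_i}\phi_j = \partial^i\phi_j$ and thus $(\partial_{t_i}K)\phi_j = -K\partial^i\phi_j$, matching the right side.

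Consequently the difference of the two sides is a matrix differential operator of order at most $M-1$ that annihilates every column of $\bphi$. By Lemma~\ref{Klemma}(b) this difference factors as $Q\circ K$ for some differential operator $Q$; but $\operatorname{ord}(K)=M$ exceeds the bound $M-1$, forcing $Q=0$. Hence the identity holds and the Lax equations follow.

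I expect the subtlest point to be confirming that no commutativity assumption has been smuggled in by the non-commutative setting. Here the commutations $[\delta_j,\partial]=[\delta_j,\partial_{t_i}]=0$ hold because these operators act in disjoint variables, and $\partial_{t_i}\psi_0=\partial^i\psi_0$ holds because $H$ commutes with its own powers; the non-diagonalizability of $H$ is invisible to this argument. The essential new input is Lemma~\ref{Klemma}(b), which stands in for the scalar-case dimension-counting argument that would otherwise fail in the matrix setting.
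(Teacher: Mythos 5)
Your proof is correct and follows essentially the same route as the paper's: both rest on the dispersion relations $\partial_{t_i}\phi_j=\partial^i\phi_j$, establish the key operator identity $\partial_{t_i}K=-(\L^i)_-\circ K$ by evaluating on the functions $\phi_j$ and invoking Lemma~\ref{Klemma}(b) together with an order count ($\operatorname{ord}\le M-1<M$) to force $Q=0$, and then obtain the Lax equation from the conjugation $\L=K\circ\partial\circ K^{-1}$. The only difference is organizational---you start from the Lax equation and reduce it to the identity for $K$, whereas the paper derives the identity first and then deduces the Lax equation---so the mathematical content and the role of Lemma~\ref{Klemma}(b) as the non-commutative replacement for the scalar dimension count are the same.
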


\begin{proof} Let $\phi=(\psi_0)\delta$ for some $\delta\in\bdelta$.  A
key observation is that for each $i\in\N$:
\begin{gather*}
\frac{\partial}{\partial t_i}\phi =  \frac{\partial}{\partial t_i}
(\psi_0)\delta = \left(\frac{\partial}{\partial t_i}
\psi_0\right)\delta =  \left((zH)^i \psi_0\right)\delta =
\left(\frac{\partial^i}{\partial x^i} \psi_0\right)\delta =
\partial^i (\psi_0)\delta=\frac{\partial^i}{\partial x^i}( \phi).
\end{gather*} Using the fact that $\phi$ satisf\/ies these ``dispersion
relations'' and the intertwining relationship $\L^i\circ K= K\circ
\partial^i$ we dif\/ferentiate the identity $ K(\phi)=0$ (which follows
from Lemma~\ref{Klemma}) by $t_i$ to get
\begin{gather*} 0  =
K_{t_i}(\phi)+ K(\phi_{t_i}) =   K_{t_i}(\phi)+ K(\partial^i \phi)
 =   K_{t_i}(\phi)+\L^i\circ  K(\phi)\\
 \hphantom{0} =   K_{t_i}(\phi)+(\L^i)_-\circ
 K(\phi)+(\L^i)_+\circ  K(\phi).
 \end{gather*}
 Since both $K$ and $(\L^i)_+$ are ordinary dif\/ferential operators (as ``$+$'' denotes the projection onto the subring of ordinary dif\/ferential operators), $(\L^i)_+\circ K$
is an ordinary dif\/ferential operator with a right factor of $K$.  Therefore, $\phi$
is in its kernel and the last term in the sum above is zero.  We may
therefore conclude that
\begin{gather} K_{t_i}(\phi)+(\L^i)_-\circ
K(\phi)=0.\label{eqstar}
\end{gather}
(Note that the second term in
this sum need not be zero since $(\L^i)_-$ is not an ordinary
dif\/ferential operator.)

 Using $\L^i\circ  K =  K\circ \partial^i$ we can split $\L^i$ into its
 positive and negative parts to get
 \begin{gather*}
  (\L^i)_-\circ  K =  K\circ
 \partial^i - (\L^i)_+\circ  K.
 \end{gather*} Since the object on the right is just
 a dif\/ference of dif\/ferential operators we know that $(\L^i)_-\circ  K$
 is a dif\/ferential operator.

According to~\eqref{eqstar}, $\Gamma(\bphi)=0$ where $\Gamma$ is the
ordinary dif\/ferential operator
\begin{gather*}
\Gamma= K_{t_i}+(\L^i)_-\circ K.
\end{gather*}
Then by Lemma~\ref{Klemma}, there exists a dif\/ferential operator~$Q$ so
that $\Gamma=Q\circ K$. However,  $\Gamma$ has order strictly less than
$M$ since the coef\/f\/icient of the $M^{\rm th}$ order term of~$ K$ is constant
by construction and since multiplying by $(\L^i)_-$ will necessarily
lower the order. This is only possible if $Q=0$ and $\Gamma$ is the zero
operator. Hence, $K_{t_i} =-(\L^i)_-\circ  K$. The Lax equation follows
because
\begin{gather*}
\L_{t_i}  =   K_{t_i}\circ \partial \circ
K^{-1}- K\circ \partial \circ K^{-1}\circ K_{t_i}\circ K^{-1}\\
\hphantom{\L_{t_i} }{} =
-(\L^i)_-\circ K\circ \partial \circ K^{-1}+ K\circ \partial \circ
K^{-1}\circ (\L^i)_-\circ K\circ  K^{-1}\\
\hphantom{\L_{t_i} }{}=
[\L,(\L^i)_-]=[(\L^i)_+,\L]. \tag*{\qed}
\end{gather*}
\renewcommand{\qed}{}
\end{proof}

\section[Operators in $x$ having $\psi$ as eigenfunction]{Operators in $\boldsymbol{x}$ having $\boldsymbol{\psi}$ as eigenfunction}

\begin{Remark} From this point onwards, the goal is to determine whether
the wave func\-tion~$\psi(\t,z)$ is an eigenfunction for an operator in
$x=t_1$ with $z$-dependent eigenvalue and vice versa.  The higher
indexed time variables will only complicate the notation.  So, it will
henceforth be assumed that $t_i=0$ for $i\geq 2$.  Then, $\bphi$ and the
coef\/f\/icients of $K$ can be con\-sidered to be functions of only the
variable $x$, and $\psi_0(x,z)=e^{xzH}$ and $\psi(x,z)$  (sometimes
called the ``stationary wave function'') are functions of~$x$ and~$z$.
Unlike the numbered assumptions, this one is made for notational
simplicity only.  Dependence on the KP time variables can be added to
the objects to be discussed below so that all claims remain valid.
\end{Remark}

\begin{Definition}  A distribution $\delta\in\Dist$ can be composed with
$p(z)\in\MN[z]$ by def\/ining  $p(z)\circ \delta$ to be the distribution
whose value when applied to $f(z)$ is the same as that of $\delta$
applied to the product $f(z)p(z)$ for any $f(z)\in\MN[z]$. Associate to
the choice $\bdelta$ of distributions the ring of polynomials
\textit{with coefficients in $\Cent$} which turn elements of $\bdelta$
into elements of $\bdelta$:
\begin{gather*}
 \A=\big\{p(z)\in\Cent[z]\colon  p(z)\circ
\delta\in\bdelta\ \forall\, \delta\in\bdelta \big\}.
\end{gather*} In particular,
for each $p\in\A$ and each basis element $\delta_i$ there exist numbers
$c_j$ such that $p\circ\delta_i=\sum\limits_{i=1}^{MN} \delta_j c_j$.
\end{Definition}

As in the scalar case, elements of $\A$ are stabilizers of the point in
the Grassmannian: if $p\in \A$ then $\Wd p\subset \Wd$.  The main
signif\/icance of the ring $\A$ is that there is a dif\/ferential operator
$L_p$ of positive order satisfying $L_p\psi=\psi p(z)$ for every
non-constant $p\in \A$. Interestingly, unlike the scalar case, it will
be shown that the question of which \textit{constant} matrices are in~$\A$ is also of interest in that whether $\psi$ is part of a bispectral
triple is related to whether~$H$ is an element of~$\A$. Before those
facts are established, however, the following def\/initions and results
show that~$\A$ contains polynomials of every suf\/f\/iciently high degree.

\begin{Definition} Let $\Supp\subset\C$ denote the support of the
distributions in $\bdelta$.  That is, $\lambda\in\Supp$ if an only if
$\Delta_{n,\lambda}$ appears with non-zero coef\/f\/icient for some $n$ in
at least one $\delta_i$.  For each $\lambda\in\Supp$ let $m_{\lambda}$
denote the \textit{largest} number $n$ such that $\Delta_{n,\lambda}$
appears with non-zero coef\/f\/icient in at least one element of $\bdelta$.
The scalar polynomial
\begin{gather*}
p_0(z)=\prod_{\lambda\in\Supp}(z-\lambda)^{m_{\lambda}+1}
\end{gather*} will be
used in the next lemma, in Def\/inition~\ref{def:L0} and also in
Theorem~\ref{mainresult} below.
\end{Definition}

\begin{Lemma} For any $p\in\Cent[z]$, the product $p_0(z)p(z)$ is in
$\A$.  So, $p_0(z)\Cent[z]\subset\A.$ \end{Lemma}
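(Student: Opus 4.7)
The plan is to prove the stronger statement that $p_0(z)p(z)\circ\delta$ is the \emph{zero} distribution for every $\delta\in\bdelta$; since $\bdelta$ is a vector space it contains $0$, this suffices to conclude $p_0(z)p(z)\in\A$.

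First I would verify the membership $p_0(z)p(z)\in\Cent[z]$ that the definition of $\A$ requires. This is immediate: $p_0(z)$ is a scalar polynomial, and scaling a matrix in $\Cent$ by a complex number keeps it in $\Cent$, so the matrix-valued coefficients of $p_0(z)p(z)$ all lie in $\Cent$.

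Next I would unpack the composition $p_0p\circ\delta$ directly from its definition. For any test polynomial $f(z)\in\MN[z]$,
\begin{gather*}
(f(z))(p_0(z)p(z)\circ\delta)=(f(z)p_0(z)p(z))\,\delta.
\end{gather*}
Writing an arbitrary $\delta\in\bdelta$ as a finite sum $\sum_j\Delta_{n_j,\lambda_j}C_j$, the definitions of $\Supp$ and $m_\lambda$ force each $\lambda_j\in\Supp$ and each $n_j\leq m_{\lambda_j}$. Therefore the right-hand side above becomes $\sum_j\bigl(f(z)p_0(z)p(z)\bigr)^{(n_j)}\big|_{z=\lambda_j}C_j$.

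The key step is that each summand vanishes. The scalar factor $p_0(z)$ has a zero of order exactly $m_{\lambda_j}+1$ at $\lambda_j$ by construction, and since it is scalar it commutes freely with $f(z)$ and $p(z)$; so by the Leibniz rule the product $f(z)p_0(z)p(z)$ inherits a zero of order at least $m_{\lambda_j}+1$ at $\lambda_j$, regardless of how $f$ and $p$ interact. Since $n_j\leq m_{\lambda_j}$, the $n_j$-th derivative at $\lambda_j$ is zero, so each summand is $0$. As this holds for every test $f$, the distribution $p_0(z)p(z)\circ\delta$ is identically zero, hence trivially lies in $\bdelta$. No serious obstacle arises: this is essentially the same argument as in the scalar case of \cite{BispSol}, and the non-commutativity of $p(z)$ with $f(z)$ is irrelevant precisely because the vanishing is forced by the scalar polynomial $p_0(z)$ alone, independently of the order in which the matrix factors appear.
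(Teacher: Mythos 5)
Your proof is correct and takes essentially the same approach as the paper's: both show that $p_0p\circ\delta$ is the zero distribution because $p_0$ has zeros of order $m_\lambda+1$ at each $\lambda\in\Supp$, which is high enough to kill every derivative the distribution takes. You merely spell out a few details (the Leibniz-rule observation, the explicit decomposition of $\delta$, the check that $p_0p\in\Cent[z]$) that the paper leaves implicit.
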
 \begin{proof} Let
$p\in\Cent[z]$ and $\delta\in\bdelta$.  Then applying the distribution
$p_0p\circ\delta$ to any polynomial $q$ is equal to $(qp_0p)\delta$.
The distribution $\delta$ will act by dif\/ferentiating and evaluating at
$z=\lambda$ for each $\lambda\in\Supp$.  However, $p_0$ was constructed
so that it has zeroes of suf\/f\/iciently high multiplicity at each
$\lambda$ to ensure that this will be equal to zero.  Hence $p_0p\circ
\delta$ is the zero distribution, which trivially satisf\/ies the
criterion in the def\/inition of $\A$. \end{proof}

\begin{Definition}\label{bispinvdef} For $p\in\Cent[z]$ where
$p=\sum\limits_{i=0}^n C_i z^i$ for $C_i\in\Cent$ def\/ine
\begin{gather*}
\bisp^{-1}(p)=\sum_{i=0}^n C_i H^{-i} \partial^i.
\end{gather*}
\end{Definition}

\begin{Lemma} For any $p\in\Cent[z]$, the constant coefficient
differential operator $\bisp^{-1}(p)$ has $\psi_0=e^{xzH}$ as an
eigenfunction with eigenvalue $p(z)$.
\end{Lemma}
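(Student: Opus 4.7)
The plan is a direct calculation whose only subtlety is tracking sides of multiplication, since everything in sight lies in the commutative-enough subalgebra generated by $H$.

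First, I would record the basic dispersion relation already displayed in the paper, $\partial\psi_0=\psi_0(zH)$, and iterate it to obtain $\partial^i\psi_0=\psi_0(zH)^i=\psi_0 z^iH^i$ for every $i\in\N$, using that the scalar $z$ commutes with $H$. Then I would apply $H^{-i}$ from the left. Since $\psi_0=e^{xzH}$ is (by definition of the matrix exponential) a convergent power series in $H$ with scalar coefficients, it commutes with every polynomial in $H$ and $H^{-1}$; in particular $H^{-i}\psi_0=\psi_0 H^{-i}$, so
\begin{gather*}
H^{-i}\partial^i\psi_0=H^{-i}\psi_0 z^iH^i=\psi_0 z^i H^{-i}H^i=\psi_0 z^i.
\end{gather*}

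Next I would use the assumption $C_i\in\Cent$, i.e.\ $[C_i,H]=0$. Because $\psi_0$ is a power series in $H$, it follows that $[C_i,\psi_0]=0$ as well, so $C_i\psi_0 z^i=\psi_0 C_i z^i$. Summing over $i$ from $0$ to $n$ gives
\begin{gather*}
\bisp^{-1}(p)\psi_0=\sum_{i=0}^n C_iH^{-i}\partial^i\psi_0=\sum_{i=0}^n C_i\psi_0 z^i=\psi_0\sum_{i=0}^n C_iz^i=\psi_0\,p(z),
\end{gather*}
which is precisely the eigenvalue equation in the paper's convention (operators in $x$ act from the left, eigenvalues multiply from the right).

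There is not really a hard step here; the only thing to be cautious about, per Remark~\ref{rem:eigenchange}, is that one must not reorder matrix factors without justification. The two permutations used above, $H^{-i}$ past $\psi_0$ and $C_i$ past $\psi_0$, are both legitimate because $\psi_0$ is an entire function of $H$, so it lies in (and is centralized by everything in) $\Cent$. No further properties of $\bdelta$, of Assumption~\ref{assumpA}, or of the operator $K$ are needed for this lemma.
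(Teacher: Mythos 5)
Your proof is correct and takes essentially the same route as the paper: both reduce to the identity $\partial^i\psi_0=\psi_0(zH)^i$ and then use that $\psi_0=e^{xzH}$ commutes with $H^{-i}$ and with each $C_i\in\Cent$ to slide $\psi_0$ to the left and collapse $H^{-i}(zH)^i$ to $z^i$. The only difference is that you spell out why those commutations are legitimate (namely, $\psi_0$ is a power series in $H$ with scalar coefficients), whereas the paper performs the rearrangement without comment; this is a slight gain in explicitness, not a different argument.
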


\begin{proof} If $p$
is the polynomial with coef\/f\/icients $C_i$ as in
Def\/inition~\ref{bispinvdef} then
\begin{gather*}
\bisp^{-1}(p)\psi_0 =  \left(\sum_{i=0}^n C_i
H^{-i}\partial^i\right)\psi_0 =\sum_{i=0}^n C_i H^{-i}\partial^i\psi_0
 = \sum_{i=0}^n C_i H^{-i}\psi_0(zH)^i\\
 \hphantom{\bisp^{-1}(p)\psi_0}{}  =\psi_0\left(\sum_{i=0}^n C_i
H^{-i}(zH)^i\right) = \psi_0\left(\sum_{i=0}^n C_iz^i\right)
=\psi_0p(z). \tag*{\qed}
\end{gather*}\renewcommand{\qed}{}
\end{proof}

\begin{Theorem}\label{claimLp} For any $p\in \A$ there is an $N\times N$
ordinary differential operator $L_p$ in $x$ satisfying the intertwining
relationship
\begin{gather*}
 L_p\circ  K= K\circ \bisp^{-1}(p)
 \end{gather*}
  and the eigenvalue
equation  $L_p\psi(x,z)=\psi p(z)$.
\end{Theorem}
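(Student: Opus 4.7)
The plan is to produce $L_p$ by using Lemma~\ref{Klemma}(b) to ``factor out'' $K$ on the right from the composition $K\circ \bisp^{-1}(p)$. The whole point of the definition of $\bisp^{-1}$ is that $\psi_0$ is an eigenfunction of $\bisp^{-1}(p)$ with eigenvalue $p(z)$; I want to transport this eigenvalue equation through $K$ and use the hypothesis $p\in\A$ to recognize that the ``obstructions'' to factoring cancel exactly against the kernel property $K\bphi = 0$.

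First I would act with $\bisp^{-1}(p)$ on each $\phi_i=(\psi_0)\delta_i$. Because $\bisp^{-1}(p)$ is a constant-coefficient differential operator in $x$ and $\delta_i$ is a finitely-supported distribution in $z$, the two actions commute, so
\begin{gather*}
\bisp^{-1}(p)\phi_i \;=\; \bigl(\bisp^{-1}(p)\psi_0\bigr)\delta_i \;=\; (\psi_0 p(z))\delta_i \;=\; (\psi_0)(p\circ \delta_i).
\end{gather*}
The assumption $p\in\A$ means $p\circ\delta_i\in\bdelta$, so expanding in the basis $p\circ\delta_i=\sum_j \delta_j B_{ji}$ gives a scalar $MN\times MN$ matrix $B$ with $\bisp^{-1}(p)\bphi = \bphi B$. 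Since $B$ has constant entries and $K\bphi=0$, it follows immediately that the matrix differential operator $K\circ\bisp^{-1}(p)$ annihilates $\bphi$. Lemma~\ref{Klemma}(b) then yields a differential operator $L_p$ with $K\circ \bisp^{-1}(p)=L_p\circ K$, which is the desired intertwining.

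To derive the eigenvalue equation I compose with $\psi_0(zH)^{-M}$ on the right:
\begin{gather*}
L_p\psi \;=\; L_p K\psi_0(zH)^{-M} \;=\; K\bisp^{-1}(p)\psi_0(zH)^{-M} \;=\; K\psi_0\, p(z)(zH)^{-M}.
\end{gather*}
Here is where non-commutativity would be dangerous, but the hypothesis $p\in\A\subset \Cent[z]$ is exactly what makes $p(z)$ commute with $H$ (hence with $(zH)^{-M}$), allowing me to rewrite the right-hand side as $K\psi_0(zH)^{-M}\, p(z)=\psi\, p(z)$.

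The main potential obstacle is the factoring step: $K\circ \bisp^{-1}(p)$ has order larger than $M$, so one cannot simply appeal to degree considerations as in the proof of Theorem~\ref{KPLax}. This is precisely why Lemma~\ref{Klemma}(b) (the universal left-divisibility property of $K$ among operators annihilating $\bphi$) was singled out; once it is invoked, the rest of the argument is a direct calculation, with the commutativity of $\Cent[z]$-coefficients with $H$ handling the bookkeeping around $(zH)^{-M}$.
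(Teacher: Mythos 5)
Your proof is correct and takes essentially the same route as the paper: commute $\bisp^{-1}(p)$ past the distributions to show $K\circ\bisp^{-1}(p)$ annihilates $\bphi$, invoke Lemma~\ref{Klemma}(b) to factor out $K$ on the right, and push the result through $\psi_0(zH)^{-M}$ using that $p\in\Cent[z]$ lets $p(z)$ slide past $(zH)^{-M}$. Your packaging in terms of a constant $MN\times MN$ matrix $B$ is just a mild reformulation of the paper's scalars $c_j$, and you are somewhat more explicit than the paper about both the commutation with $\delta_i$ and the step $p(z)(zH)^{-M}=(zH)^{-M}p(z)$, which is a fine clarification.
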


\begin{proof} Let $p\in \A$. Observe that
\begin{gather*} K\circ
\bisp^{-1}(p)(\phi_i) =  K\circ \bisp^{-1}(p)((\psi_0)\delta_i) =K\big(\big(
\bisp^{-1}(p)\psi_0\big)\delta_i\big) = K(( \psi_0)p(z)\circ\delta_i)\\
\hphantom{K\circ\bisp^{-1}(p)(\phi_i)}{}
=K\left(\sum_{j=1}^{MN} (\psi_0)\delta_jc_j \right) = \sum_{j=1}^{MN}
K(\phi_j)c_j =\sum_{j=1}^{MN} 0\times c_j =0.
 \end{gather*}
 But, that
means that each function $\phi_i$ is in the kernel of $K\circ
\bisp^{-1}(p)$ and hence by Lemma~\ref{Klemma} there is a dif\/ferential
operator $L_p$ such that $ K\circ \bisp^{-1}(p)=L_p\circ K$. This
establishes the intertwining relationship.

Applying $L_p$ to $\psi= K \psi_0(zH)^{-M}$ one f\/inds
\begin{gather*}
L_p\psi = L_p\big( K\psi_0(zH)^{-M}\big) = (L_p\circ  K)\psi_0(zH)^{-M} =  \big(
K\circ \bisp^{-1}(p)\big)\psi_0(zH)^{-M} \\
\hphantom{L_p\psi}{} =  K \psi_0 p(z)(zH)^{-M} =  K
\psi_0 (zH)^{-M}p(z)=\psi p(z). \tag*{\qed}
\end{gather*}
\renewcommand{\qed}{}
\end{proof}

\section[Operators in $z$ having $\psi$ as eigenfunction]{Operators in $\boldsymbol{z}$ having $\boldsymbol{\psi}$ as eigenfunction}

\begin{Definition} For any $\alpha\in\C$ let the translation operator
$\Trans{\alpha}$ act on functions of $z$ according to the def\/inition
\begin{gather*}
(f(z))\Trans{\alpha}=f(z+\alpha).
\end{gather*}
 Further let $\Trans{\alpha}^H=
\sum_{i,j} \Trans{\alpha \gamma_i}(\alpha \partial_z)^jC_{ij}$ where
the matrices $C_{ij}$ and constants $\gamma_i$ are def\/ined by the formula
\begin{gather}
 \exp\big({xH^{-1}}\big)=\sum_{i,j}
\exp({\gamma_i x})x^j C_{ij}\qquad
\text{(with $\gamma_i\not=\gamma_{i'}$ if $i\not=i'$).}
\label{eqn:gammadef}
\end{gather}
 \end{Definition}

\begin{Lemma}
The
differential-translation operator $\Trans{\alpha}^H$ has $\psi_0=e^{xzH}$ as an
eigenfunction with eigenvalue $e^{\alpha x}$:
\begin{gather*}
\psi_0\Trans{\alpha}^H=e^{\alpha x}\psi_0.
\end{gather*}
\end{Lemma}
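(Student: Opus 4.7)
The plan is to compute $\psi_0\Trans{\alpha}^H$ directly by applying each summand to $\psi_0=e^{xzH}$, and then recognize the result as the evaluation of the defining identity \eqref{eqn:gammadef} at a matrix argument. Since operators in $z$ act from the right, one has
\begin{gather*}
\psi_0\Trans{\alpha\gamma_i}=e^{x(z+\alpha\gamma_i)H}=\psi_0\,e^{\alpha\gamma_i xH},
\end{gather*}
using that $H$ commutes with itself. Likewise, $\psi_0\partial_z=(xH)\psi_0=\psi_0(xH)$, so $\psi_0(\alpha\partial_z)^j=\psi_0(\alpha xH)^j$. Multiplying by $C_{ij}$ on the right and summing yields
\begin{gather*}
\psi_0\Trans{\alpha}^H=\psi_0\sum_{i,j}e^{\alpha\gamma_i xH}(\alpha xH)^jC_{ij}.
\end{gather*}
Thus the lemma reduces to proving the matrix identity
\begin{gather*}
\sum_{i,j}e^{\alpha\gamma_i xH}(\alpha xH)^jC_{ij}=e^{\alpha x}I.
\end{gather*}

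The next step is to obtain this identity by specializing \eqref{eqn:gammadef} to the matrix argument $\alpha xH$. Reading \eqref{eqn:gammadef} as the Dunford decomposition of $\exp(xH^{-1})$, each coefficient $C_{ij}$ is (up to a scalar) the product of the spectral projector of $H^{-1}$ onto the $\gamma_i$-generalized eigenspace with a power of its nilpotent part; both factors are polynomials in $H^{-1}$, hence lie in $\C[H]$. Consequently $\alpha xH$ commutes with every $C_{ij}$ and with $H^{-1}$, so the scalar identity \eqref{eqn:gammadef} continues to hold with $\alpha xH$ substituted for $x$. Since $(\alpha xH)H^{-1}=\alpha xI$, this substitution gives
\begin{gather*}
e^{\alpha x}I=\exp\bigl((\alpha xH)H^{-1}\bigr)=\sum_{i,j}\exp(\gamma_i\cdot\alpha xH)(\alpha xH)^jC_{ij},
\end{gather*}
which is precisely the identity required.

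The only subtle point is the commutativity needed to make the substitution into \eqref{eqn:gammadef} legitimate, namely that each $C_{ij}$ lies in the centralizer of $H$. Once this is verified from the construction of the $C_{ij}$ via the Jordan form of $H^{-1}$, everything else is a direct computation. In particular, no assumption on diagonalizability of $H$ is needed, which is consistent with the paper's emphasis on working with general invertible $H$.
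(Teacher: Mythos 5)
Your proof is correct and follows essentially the same route as the paper: apply $\Trans{\alpha}^H$ term by term to $\psi_0$, reduce to the identity $\sum_{i,j}e^{\gamma_i\alpha xH}(\alpha xH)^jC_{ij}=e^{\alpha x}I$, and obtain it by substituting the matrix $\alpha xH$ for the scalar variable in~\eqref{eqn:gammadef}. The one small divergence is how the required commutativity $[H,C_{ij}]=0$ is justified: the paper deduces it indirectly from the fact that $H$ commutes with $\exp(xH^{-1})$ for every scalar $x$ (so, by linear independence of the functions $e^{\gamma_i x}x^j$, each coefficient must commute), whereas you argue directly from the Dunford/Jordan construction that each $C_{ij}$ is a polynomial in $H^{-1}$ and hence lies in $\Cent$. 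Both arguments are valid; yours is a bit more structural, the paper's a bit more self-contained, but they prove the same fact and the overall proof strategy is identical.
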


\begin{proof}
By def\/inition,
\begin{gather*}
 e^{xzH}\Trans{\alpha}^H=e^{xzH}\sum_{i,j} T_{\alpha \gamma_i}(\alpha
\partial_z^j)C_{ij}=e^{xzH}\sum_{i,j} e^{x\alpha \gamma_i H}(\alpha x
H)^jC_{ij}.
\end{gather*}
 Then the claim follows if it can be shown that the sum in the last expression is equal to $e^{\alpha x}$.

However, \eqref{eqn:gammadef} holds
not only for any scalar~$x$ but also when it is replaced by some
matrix which commutes with the matrices $H^{-1}$ and $C_{ij}$ which
appear in it.  In particular, since~$\alpha x H$ commutes with
$H^{-1}$, its commutator with the expression on the left side of
equation~\eqref{eqn:gammadef} is zero.  From this one can determine that
its commutator with each coef\/f\/icient matrix~$C_{ij}$ on the right is
zero as well.  However, replacing~$x$ with~$\alpha x H$ yields the
formula
\begin{gather*}
\exp(\alpha x)=\sum_{i,j}
\exp({\gamma_i \alpha x H})(\alpha x H)^j C_{ij}
\end{gather*}
as needed.
\end{proof}

The goal of this section is to produce operators in $z$ which are matrix
dif\/ferential-translation operators in~$z$ having rational coef\/f\/icients
that share the eigenfunction~$\psi$ with the dif\/ferential operators
$L_p$ in $x$ constructed in the previous section.  In order for the
construction to work, an additional assumption is required:
\begin{delta-assume}\label{assumpB}
Henceforth, it is assumed that
$H\in\A$.  Equivalently, assume that for each $1\leq i\leq MN$ there
exist numbers $c_j$ such that
\begin{gather*}
 H\delta_i=\sum_{j=1}^{MN}c_j \delta_j.
\end{gather*}
\end{delta-assume}

\subsection{The anti-isomorphism}\label{sec:antiiso}

This section will introduce an anti-isomorphism between rings of
operators in~$x$ and~$z$ respectively such that an operator and its
image have the same action on~$\psi_0$. The use of such an
anti-isomorphism as a method for studying bispectrality was pioneered in
special cases in~\cite{Wilson} and~\cite{KR} and extended to a very general commutative context in~\cite{BHY}.  (Additionally, three months after the present paper was posted and submitted, a new preprint by two of the same authors as~\cite{BHY} appeared which seeks to further generalize those results to the non-commutative context~\cite{GHY}.)

\begin{Definition} The two rings of operators of interest to this
construction are
\begin{gather*}
 \Weyl=\bigoplus_{\alpha\in\C}e^{\alpha
x}\Cent[x,\partial] \qquad \hbox{and} \qquad
\WeylFlat=\bigoplus_{\alpha\in\C}\Trans{\alpha}^H\Cent[z,\partial] .
\end{gather*}
\end{Definition}

Note that operators from both rings involve  polynomial coef\/f\/icient
matrix dif\/ferential ope\-ra\-tors which commute with the constant matrix~$H$, but elements of~$\Weyl$ may also include a f\/inite number of factors
of the form $e^{\alpha x}$ while elements of~$\WeylFlat$ may similarly
include factors of~$\Trans{\alpha}^H$ for a f\/inite number of complex
numbers~$\alpha$.

\begin{Definition}  Let $\bisp\colon \Weyl\to\WeylFlat$ be def\/ined by
\begin{gather} \bisp\left(\sum_{i=0}^l\sum_{j=0}^m\sum_{k=1}^n
C_{ijk}e^{\alpha_kx}x^i\partial^j\right)
=\sum_{i=0}^l\sum_{j=0}^m\sum_{k=1}^n\partial_z^i  \Trans{\alpha_k}^H
z^jC_{ijk}H^{j-i},
\label{eqn:bisp}
\end{gather}
where $C_{ijk}\in\Cent$
are the coef\/f\/icient matrices.
\end{Definition}

\begin{Lemma}\label{lem:bisp} For any $L_0\in\Weyl$, the operators $L_0$
and $\bisp(L_0)$ have the same action on $\psi_0=e^{xzH}$:
\begin{gather*}
L_0\psi_0=\psi_0\bisp(L_0).
\end{gather*}
 Moreover, $\bisp$ is an
anti-isomorphism.
\end{Lemma}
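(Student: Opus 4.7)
The lemma contains two assertions---the eigenfunction equation $L_0\psi_0 = \psi_0\bisp(L_0)$ and the anti-isomorphism property---which I would handle in turn. For the eigenfunction equation, linearity in $L_0$ reduces the claim to checking a single normal-form monomial $L_0 = Ce^{\alpha x}x^i\partial^j$ with $C\in\Cent$. On the left, iterating $\partial\psi_0 = \psi_0(zH)$ yields $\partial^j\psi_0 = \psi_0\,z^jH^j$, and the factors $C$, $e^{\alpha x}$, and $x^i$ then commute freely past $\psi_0$ (the scalars trivially, and $C$ because it lies in the centralizer $\Cent$ of $H$). On the right, unfold $\bisp(L_0) = \partial_z^i\Trans{\alpha}^H z^jCH^{j-i}$ factor by factor: first apply $\partial_z^i$ using $\psi_0\partial_z = (xH)\psi_0$ to get $\psi_0\partial_z^i = x^iH^i\psi_0$; then $\Trans{\alpha}^H$ using the preceding lemma $\psi_0\Trans{\alpha}^H = e^{\alpha x}\psi_0$; and finally pass the constant factors $z^j$, $C$, $H^{j-i}$ through $\psi_0$. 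The identities $H^iH^{j-i}=H^j$ and $[C,H]=0$ collapse both expressions to the common form $\psi_0\cdot e^{\alpha x}x^iz^jH^jC$.

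For the anti-isomorphism, observe first that the monomials $Ce^{\alpha x}x^i\partial^j$ (with $C$ ranging over a basis of $\Cent$) form a PBW-type basis of $\Weyl$, and the monomials $\partial_z^i\Trans{\alpha}^H z^jCH^{j-i}$ form the analogous basis of $\WeylFlat$; formula \eqref{eqn:bisp} therefore defines $\bisp$ as a linear bijection. The substance lies in the multiplicative relation $\bisp(L_1L_2) = \bisp(L_2)\bisp(L_1)$, which I would derive by combining the eigenfunction equation with the observation that a left-acting differential operator $L_1$ in $x$ commutes, as an action on functions of $(x,z)$, with any right-acting operator $\bisp(L_2)$ in $z$---since $\partial_x$ commutes with $\partial_z$ and with the translations $\Trans{\alpha\gamma_i}$, and left $\Cent$-multiplication commutes with right $\Cent$-multiplication. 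Thus
\begin{gather*}
L_1L_2\psi_0 \;=\; L_1\bigl(\psi_0\bisp(L_2)\bigr) \;=\; \bigl(L_1\psi_0\bigr)\bisp(L_2) \;=\; \psi_0\bisp(L_1)\bisp(L_2),
\end{gather*}
which, compared with $L_1L_2\psi_0 = \psi_0\bisp(L_1L_2)$ and interpreted under the natural convention for composition of right-acting operators, yields the desired anti-homomorphism modulo faithfulness of the action of $\WeylFlat$ on $\psi_0$.

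I expect the \emph{main obstacle} to be exactly this faithfulness statement: deducing $A=0$ from $\psi_0 A=0$ for $A\in\WeylFlat$. To handle it, I would decompose $A = \sum_\alpha \Trans{\alpha}^H A_\alpha$ with $A_\alpha\in\Cent[z,\partial_z]$, use that the exponential factors $e^{\alpha x}$ produced by the translations are linearly independent over $\C(x)$ to reduce to a single $\alpha$, and then within each block extract the matrix coefficients of $A_\alpha$ by comparing polynomial degrees in $z$ after applying $\psi_0\partial_z^k = (xH)^k\psi_0$; the invertibility of $H$ is essential here to disentangle the $H^{j-i}$ factors. An alternative that sidesteps faithfulness entirely is to verify the anti-homomorphism directly on the generating set $\Cent\cup\{x,\partial,e^{\alpha x}\}$ of $\Weyl$, checking that the images satisfy the defining relations with order reversed; the only nontrivial computations are the commutators $[\bisp(\partial),\bisp(x)]$ and $[\bisp(\partial),\Trans{\alpha}^H]$, and the latter requires careful bookkeeping with the constants $\gamma_i$ and matrices $C_{ij}$ from the definition of $\Trans{\alpha}^H$, which is where the non-diagonalizability of $H$ complicates matters compared to the scalar or diagonal case.
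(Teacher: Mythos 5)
Your proposal follows essentially the same route as the paper: first establish $L_0\psi_0=\psi_0\bisp(L_0)$ by pushing the monomial factors through $\psi_0$ one at a time (using $\partial\psi_0=\psi_0(zH)$, $\psi_0\partial_z=(xH)\psi_0$, the preceding lemma for $\Trans{\alpha}^H$, and $[C,H]=0$), and then deduce the anti-homomorphism property by applying $L_1$ and $L_2$ to $\psi_0$ in two orders and invoking faithfulness of the $\WeylFlat$-action on $\psi_0$. That is exactly the paper's computation.

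Two small remarks. First, be careful with the direction of the composition formula: the paper obtains $\bisp(K_2\circ K_1)=\bisp(K_2)\circ\bisp(K_1)$, i.e.\ the order of the factors is literally \emph{preserved}, and it is only because right-acting composition $\bisp(K_2)\circ\bisp(K_1)$ applies $\bisp(K_2)$ first that the map deserves the name ``anti''-isomorphism (Remark~\ref{rem:antiq}). Your written formula $\bisp(L_1L_2)=\bisp(L_2)\bisp(L_1)$ is the form one would expect for an anti-homomorphism of left-acting rings, and it is correct only under the specific juxtaposition convention for right-acting operators that your own derivation actually produces $\psi_0\bisp(L_1)\bisp(L_2)$; it is worth pinning that convention down to avoid a sign-of-order error. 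Second, you rightly flag faithfulness of the $\WeylFlat$-action on $\psi_0$ as the crux: the paper simply asserts ``the only operator in $\WeylFlat$ having $\psi_0$ in its kernel is the zero operator'' without proof, whereas you supply a plausible argument (linear independence of the $e^{\alpha x}$ factors over $\C(x)$, then matching $z$-degrees, using invertibility of $H$). That is a genuine, if modest, improvement in completeness over what the paper writes down.
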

 \begin{proof} Since
\begin{gather*} \sum_{i=0}^m\sum_{j=0}^n \sum_k C_{ij}x^ie^{\alpha_k
x}\partial^j(\psi_0)  = \sum_{i=0}^m\sum_{j=0}^n \sum_k
C_{ij}e^{\alpha_k x}x^i(zH)^j(\psi_0)\\
\qquad{}
 = \sum_{i=0}^m\sum_{j=0}^n\sum_k C_{ij}e^{\alpha_k
x}x^i(\psi_0)(zH)^j
 = \sum_{i=0}^m\sum_{j=0}^n\sum_k
C_{ij}e^{\alpha_k x}(\psi_0)\big(H^{-1}\partial_z\big)^i(zH)^j\\
\qquad{}
 = \sum_{i=0}^m\sum_{j=0}^n\sum_k
C_{ij}(\psi_0)\Trans{\alpha_k}^H\big(H^{-1}\partial_z\big)^i(zH)^j
 = \sum_{i=0}^m\sum_{j=0}^n \sum_k
(\psi_0)\big(H^{-1}\partial_z\big)^i\Trans{\alpha_k}^H(zH)^jC_{ij}\\
\qquad{}
 = (\psi_0)\sum_{i=0}^m\sum_{j=0}^n\sum_k
\big(H^{-1}\partial_z\big)^i\Trans{\alpha_k}^H(zH)^jC_{ij}
 = \psi_0\bisp\left(\sum_{i=0}^m\sum_{j=0}^n \sum_k C_{ij}x^ie^{\alpha_k
x}\partial^j\right)
\end{gather*} it follows that $L_0$ and $\bisp(L_0)$
have the same action on $\psi_0$.

The inverse map $\bisp^{-1}$ is found by simply moving the factor of
$H^{j-i}$ to the other side of equation~\eqref{eqn:bisp}, which conf\/irms
that $\bisp$ is a bijection. If $K_1,K_2\in \Weyl$ we have
\begin{gather*}
 (K_2\circ K_1)\psi_0=K_2(K_1(\psi_0))=K_2(\psi_0\bisp(K_1))=(K_2(\psi_0))\bisp(K_1
)\\
\hphantom{(K_2\circ K_1)\psi_0}{}
 =(\psi_0\bisp(K_2))\bisp(K_1)=\psi_0(\bisp(K_2)\circ
\bisp(K_1)).
\end{gather*}
 However, we also know that $(K_2\circ
K_1)\psi_0=\psi_0\bisp(K_2\circ K_1)$.  Then, $\bisp(K_2\circ
K_1)-\bisp(K_2)\circ\bisp(K_1)$ has the $\psi_0$ in its kernel.  The
only operator in $\WeylFlat$ having $\psi_0$ in its kernel is the zero
operator.  Therefore,
\begin{gather} \bisp(K_2\circ
K_1)=\bisp(K_2)\circ\bisp(K_1)\label{eqn:anti-iso}
\end{gather} and
the map is an anti-isomorphism.
\end{proof}

\begin{Remark}\label{rem:antiq} The reader may be surprised to see
``anti-isomorphism'' being used to describe a~map satisfying
equation~\eqref{eqn:anti-iso}.  Generally, a map having this property is
called an \textit{isomorphism}.  There are two reasons this terminology
is being used here.  First, that is the terminology that was used for
the analogous map in previous papers in the scalar setting
\cite{BHY,BispSol,KR}.  More importantly, the fact that the order of
operators here is preserved is merely a consequence of the fact that the
operators in $z$ act from the right while the operators in $x$ act from
the left.  Note that when $K_2\circ K_1$ acts on a function it is $K_1$
that acts f\/irst while when $\bisp(K_2)\circ \bisp(K_1)$ acts it is
$\bisp(K_2)$ that acts f\/irst and it is in this sense that it is an
\textit{anti}-isomorphism.  Nevertheless, it is interesting to note that
in this context the map does preserve the order of operators in a
product.  It may be that had the operators in dif\/ferent variables been written as acting from opposite sides even in the commutative case from the beginning, the map would have been described as an isomorphism
instead. \end{Remark}

\subsection[Eigenvalue equations for operators in $z$]{Eigenvalue equations for operators in $\boldsymbol{z}$}

\begin{Definition}\label{def:L0} Let $L_0$ be the constant coef\/f\/icient
matrix dif\/ferential operator
\begin{gather*}
 L_0=\bisp^{-1}(p_0(z)I).
\end{gather*}
\end{Definition}

\begin{Lemma}\label{lem:L0kills}  $L_0\bphi=0$ and so $L_0=Q\circ K$ for
some $Q$.
\end{Lemma}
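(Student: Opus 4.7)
The plan is to verify the first assertion $L_0\bphi=0$ by applying $L_0$ to each $\phi_i=(\psi_0)\delta_i$ and reducing the problem to the construction of $p_0(z)$; the second assertion then follows immediately from Lemma~\ref{Klemma}(b).

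First, I would use the preceding lemma that gives $\bisp^{-1}(p)\psi_0=\psi_0 p(z)$ for any $p\in\Cent[z]$. Taking $p(z)=p_0(z)I$, this specializes to $L_0\psi_0=\psi_0 p_0(z)$. The next step is to commute $L_0$ past the right-acting distribution $\delta_i$: since $L_0$ is a polynomial in $\partial_x$ with constant coefficients in $\Cent$ acting from the left, and $\delta_i$ acts only on the $z$-dependence from the right, these actions commute. Thus
\begin{gather*}
L_0\phi_i=L_0((\psi_0)\delta_i)=(L_0\psi_0)\delta_i=(\psi_0 p_0(z)I)\delta_i.
\end{gather*}

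Using the composition of a polynomial with a distribution introduced in the construction of $\A$, the right-hand side equals $(\psi_0)(p_0(z)I\circ\delta_i)$. It remains to show that $p_0(z)I\circ\delta_i$ is the zero distribution. This is exactly the content of the observation made in the lemma just preceding the present statement: $p_0$ was chosen to have a zero of order $m_\lambda+1$ at every point $\lambda$ of the support $\Supp$, where $m_\lambda$ is the largest index $n$ such that $\Delta_{n,\lambda}$ appears with nonzero coefficient in some element of $\bdelta$. Consequently, for any polynomial $q(z)$ the product $qp_0$ vanishes to sufficient order at every $\lambda\in\Supp$ so that $(qp_0)\delta_i=0$. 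Hence $p_0\circ\delta_i=0$, so $L_0\phi_i=0$ for every $i$, which is the claim $L_0\bphi=0$.

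For the factorization, Lemma~\ref{Klemma}(b) applied to $L_0$ produces a differential operator $Q$ with $L_0=Q\circ K$. I do not expect any real obstacle here: the main content is simply bookkeeping between the right action of $\bdelta$ on functions of $z$, the left action of constant-coefficient operators in $x$, and the explicit choice of $p_0$. The only point worth emphasizing is that moving $L_0$ through the distribution requires its coefficients to be constant in $x$, which is precisely why $L_0$ was defined as $\bisp^{-1}(p_0I)$ rather than as some more general object.
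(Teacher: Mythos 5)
Your proposal is correct and follows essentially the same route as the paper: commute the constant-coefficient operator $L_0$ past the right-acting distribution, invoke the eigenvalue identity $L_0\psi_0=\psi_0 p_0(z)$ for $L_0=\bisp^{-1}(p_0 I)$, observe that $p_0$ was built to annihilate under each $\delta_i$, and then apply Lemma~\ref{Klemma}(b). The only cosmetic difference is that you cite the earlier unnumbered lemma on $\bisp^{-1}$ while the paper cites Lemma~\ref{lem:bisp}; both yield the same equality.
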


\begin{proof} Because differentiation in $x$ and
multiplication on the left commute with the application of the
distributions $\delta_i$ we have
\begin{gather*}
L_0\phi_i=L_0(\psi_0)\delta_i=(L_0\psi_0)\delta_i.
\end{gather*}
 By
Lemma~\ref{lem:bisp}, $L_0\psi_0=\psi_0p_0(z)$.  The polynomial~$p_0$
was chosen so that it has a zero of high enough multiplicity at each
point in the support of $\delta_i$ to guarantee that
\begin{gather*}
 (\psi_0
p_0(z))\delta_i=0.
\end{gather*}
 It then follows from Lemma~\ref{Klemma} that
$L_0=Q\circ K$.
\end{proof}

In the remainder of this construction, $Q$ will denote the dif\/ferential
operator such that $L_0=Q\circ K$.

\begin{Lemma}\label{KcommuteswithH} Given Assumption~{\rm \ref{assumpB}}, the
operators $L_0$, $K$ and $ Q$ commute with $H$.
\end{Lemma}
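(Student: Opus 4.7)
The plan is to handle the three assertions in turn, using each result to establish the next. First, $L_0$ commutes with $H$ essentially by construction: writing the scalar polynomial $p_0(z) = \sum_i a_i z^i$ with $a_i \in \C$, Definitions \ref{def:L0} and \ref{bispinvdef} give $L_0 = \sum_i a_i H^{-i} \partial^i$, and every coefficient $a_i H^{-i}$ is a polynomial in $H$ alone, hence commutes with $H$.

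For $K$, the strategy is to show that $H\bphi$ lies in the \emph{scalar} span of the columns of $\bphi$ so that $K(H\bphi) = 0$ follows by linearity, and then to pin down $[H,K]$ via Lemma \ref{Klemma}(b) together with an order count. The crucial computation uses both that $H$ commutes with $\psi_0 = e^{xzH}$ and Assumption \ref{assumpB}:
\begin{gather*}
H \phi_i = H((\psi_0)\delta_i) = (H\psi_0)\delta_i = (\psi_0 H)\delta_i = (\psi_0)(H \circ \delta_i) = \sum_{j=1}^{MN} c_j \phi_j,
\end{gather*}
with the $c_j \in \C$ \emph{scalar}, where the second equality is because the constant matrix $H$ commutes with the right action of $\delta_i$, the fourth is the definition of $\circ$, and the last is Assumption \ref{assumpB}. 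Since $K$ is $\C$-linear, $K(H\bphi) = 0$ then follows from $K\bphi = 0$. Because $K$ is monic with identity leading coefficient, the commutator $[H,K]$ has order strictly less than $M$, while $[H,K](\bphi) = H(K\bphi) - K(H\bphi) = 0$. Lemma \ref{Klemma}(b) yields $[H,K] = Q' \circ K$ for some differential operator $Q'$, and the order bound forces $Q' = 0$, so $[H,K] = 0$.

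For $Q$, expand
\begin{gather*}
0 = [H, L_0] = [H, Q \circ K] = [H,Q]\circ K + Q \circ [H,K] = [H,Q]\circ K,
\end{gather*}
using the previous two results in the final step. Since $K$ is monic, it is invertible in the ring of matrix-coefficient pseudo-differential operators, so right-multiplication by $K$ is injective and $[H,Q] = 0$.

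The only step that genuinely uses the hypothesis of the lemma is the expansion of $H\phi_i$ in the middle paragraph, and this is where the main (essentially only) subtlety lies. Assumption \ref{assumpB} is precisely what guarantees that the expansion coefficients $c_j$ are scalars rather than elements of $\Cent$; matrix coefficients from $\Cent$ would not generally commute past the matrix-coefficient operator $K$, and the linearity step $K\bigl(\sum_j c_j \phi_j\bigr) = \sum_j c_j K\phi_j$ would fail. Everything else is routine order-counting and algebraic manipulation.
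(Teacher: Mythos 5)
Your proof is correct and takes essentially the same route as the paper: the crux in both is the computation $H\phi_i = (\psi_0)(H\circ\delta_i) = \sum_j c_j\phi_j$, after which Lemma~\ref{Klemma} and cancellation by $K^{-1}$ do the rest. The only cosmetic difference is that the paper invokes the uniqueness clause of Lemma~\ref{Klemma} to conclude $H^{-1}KH=K$ directly, whereas you invoke the factorization clause together with an order count to conclude $[H,K]=0$; these are logically equivalent moves. One small correction to your closing remark: Assumption~\ref{assumpB} does not ``make the $c_j$ scalar rather than in $\Cent$''---since $\bdelta$ is by hypothesis an $MN$-dimensional $\C$-vector space with basis $\{\delta_j\}$, any element of it automatically has scalar coordinates. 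What the assumption actually buys is the containment $H\circ\delta_i\in\bdelta$; without it, $H\circ\delta_i$ need not lie in the $\C$-span of the $\delta_j$ at all, and then there would be no expansion on which to run the rest of the argument.
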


\begin{proof} Applying the operator $H^{-1}KH$ to $\phi_i$ we see that
\begin{gather*}
H^{-1}KH\phi_i = H^{-1}KH\big(e^{xzH}\big)\delta_i=H^{-1}K\big(e^{xzH}\big)H\delta_i
=H^{-1}K\left(\sum_{j=1}^{MN} c_j\phi_j\right) \\
\hphantom{H^{-1}KH\phi_i}{} = H^{-1}\left(\sum_{j=1}^{MN}
c_jK(\phi_j)\right) =0.
\end{gather*}
However, by Lemma~\ref{Klemma}, $K$ is
the unique monic operator of order~$M$ having all the basis functions~$\phi_i$ in its kernel so $H^{-1}KH=K$.

We also know that $L_0$ commutes with $H$ because $p_0$ is a scalar
multiple of the identity and~$\bisp^{-1}$ which turns $p_0$ into $L_0$
only  introduces additional powers of $H$. Then
\begin{gather*} HQ\circ
K  = H L_0=L_0H =  Q \circ KH=Q \circ HK = QH \circ K.
\end{gather*}
Multiplying this by $K^{-1}$ on the right yields
$HQ=QH$.
\end{proof}

Note that $K$ and $Q$ are probably not in~$\Weyl$, as they may be
\textit{rational} functions in~$x$ and a~f\/inite number of functions of
the form~$e^{\alpha x}$.  However, it is possible to clear their
denominators either by multiplying by a function on the left or by
composing with a function on the right, which motivates the following
def\/inition:

\begin{Definition}
Let $\Aflat$ be the set of $x$-dependent $N\times N$
matrix functions def\/ined as follows:
\begin{gather*}
 \Aflat=\left\{ \piQK\in
\bigoplus_{\alpha\in\C}\Cent[x]e^{\alpha x}\colon \piQK=\pi_Q\pi_K,\
\pi_K(x)K\in\Weyl, Q\circ \pi_K(x)\in\Weyl\right\}.
\end{gather*}
 In other words,
it is the set of zero order elements of~$\Weyl$ which factor as a~product such the right factor times~$K$ and~$Q$ composed with the left
factor are both elements of~$\Weyl$.
\end{Definition}

\begin{Lemma} $\Aflat$ is non-empty and contains matrix functions that
are non-constant in $x$. \end{Lemma}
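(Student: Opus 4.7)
My plan is to exhibit $\piQK = \pi_Q \pi_K$ with $\pi_Q = I$ and $\pi_K = (\det \Phi)^r \, I$ for a sufficiently large integer $r$; non-constancy of $\piQK$ will then follow since $\det \Phi$ is a non-zero scalar function of $x$ (by Assumption~\ref{assumpA}) that generically depends on $x$.

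The key structural observations are as follows. Each $\phi_i = (\psi_0)\delta_i$ is a finite $\C$-linear combination of terms $(xH)^n e^{\lambda x H} C$, so the entries of $\Phi$ (and hence $\det \Phi$) lie in $\bigoplus_{\alpha\in\C} \C[x] e^{\alpha x}$. Formula~\eqref{eqn:K} then shows that each matrix coefficient of $K$ is $(\det \Phi)^{-1}$ times an element of $\bigoplus_\alpha \C[x] e^{\alpha x}$; by Lemma~\ref{KcommuteswithH} these coefficients also lie in $\Cent$. Using the existence of $Q$ from Lemma~\ref{lem:L0kills}, I would then perform right-division in $L_0 = Q \circ K$ and keep track of denominators; an induction on the order of the intermediate remainder shows that the matrix coefficients of $Q$ are similarly rational in $x$, with denominators some bounded power $(\det \Phi)^a$, numerators in $\bigoplus_\alpha \C[x] e^{\alpha x}$, and values in $\Cent$.

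Given these facts, verification is routine. For any $r \geq 1$, $\pi_K K \in \Weyl$ because left-multiplication by the scalar $(\det \Phi)^r$ clears the single denominator in each coefficient of $K$ while preserving $\Cent$-valuedness. For $Q \circ \pi_K \in \Weyl$ I would expand by the Leibniz rule,
\begin{gather*}
Q \circ (\det \Phi)^r I = \sum_{k,j} \binom{k}{j} Q_k \left(\frac{d^{k-j}}{dx^{k-j}} (\det \Phi)^r\right) \partial^j,
\end{gather*}
and observe that each derivative of $(\det \Phi)^r$ equals $(\det \Phi)^{r-(k-j)}$ times a polynomial in $\det \Phi$ and its derivatives, so choosing $r$ to exceed $a$ plus the order of $Q$ in $\partial$ suffices to cancel all denominators. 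The main obstacle is the inductive bookkeeping needed to bound $a$, but this follows routinely from the right-division structure, and the non-constancy of $\piQK = (\det \Phi)^r \, I$ is immediate from $\det \Phi$'s non-trivial $x$-dependence.
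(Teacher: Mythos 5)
Your approach matches the paper's in spirit: both arguments clear the denominators of $K$ and $Q$ by multiplying with suitable matrix functions, and then observe that one can introduce an extra non-constant factor. Where the paper is content to refer abstractly to ``the least common multiple of the denominators,'' you pin those denominators down concretely as powers of $\det\Phi$ (via Cramer's rule for $K$, and a right-division argument for $Q$), and your Leibniz expansion for composing $Q$ with a function on the right is exactly the step the paper waves at with ``a high enough power.''

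There is, however, a real gap in your choice of factorization. You set $\pi_Q=I$ and verify $Q\circ\pi_K\in\Weyl$, which is what the displayed formula in the definition of $\Aflat$ literally says. But that display appears to contain a typo: the prose following it (``$Q$ composed with the \emph{left} factor'') and, decisively, the statement of Theorem~\ref{mainresult} (``$\bar Q=Q\circ\pi_Q\in\Weyl$'') both require $Q\circ\pi_Q\in\Weyl$. With $\pi_Q=I$ this fails, since $Q$ itself has rational coefficients and is not in $\Weyl$; so the element you produce, with the factorization you exhibit, is not usable in Theorem~\ref{mainresult}. The fix is immediate and stays entirely within your framework: distribute the power of $\det\Phi$ over both factors, e.g.\ take $\pi_K=(\det\Phi)^{s}I$ with $s\ge 1$ to clear $K$'s single denominator, and $\pi_Q=(\det\Phi)^{r}I$ with $r$ large (exceeding, say, the order of $Q$ plus the exponent $a$ you bound) so that $Q\circ\pi_Q\in\Weyl$ by your Leibniz computation. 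Then $\piQK=(\det\Phi)^{r+s}I$ is a valid non-constant element. One further remark: your claim that the denominators of $Q$ are powers of $\det\Phi$ is plausible and your outline is reasonable, but you defer it to ``inductive bookkeeping''; the paper avoids committing to any particular form of the denominators precisely to sidestep this verification, so if you want the sharper statement you should carry the induction through.
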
 \begin{proof} Note that $K$ and $Q$
are operators that are rational in $x$ and a f\/inite number of terms of
the form $e^x$.  If we let $\pi_K$ be the least common multiple of the
denominators of the coef\/f\/icients in $K$ then $\pi_K K$ is in $\Weyl$
(since we have simply cleared the denominator by multiplication).
Similarly, if we let $\pi_Q$ be a high enough power of the least common
multiple of the denominators of $Q$ then $Q\circ \pi_Q\in\Weyl$.  Then,
$\piQK=\pi_Q\pi_K$ is by construction an element of $\Aflat$.  Since,
$\pi_Q f \pi_K$ is also an element of $\Aflat$ for any order zero
element of $\Weyl$, $\Aflat$ contains non-constant matrix functions.
\end{proof}

The main result is the construction of an operator $\Lambda$ in $z$ with
eigenvalue $\piQK$ when applied to the  wave function $\psi$ for every
$\piQK\in\Aflat$:

\begin{Theorem}\label{mainresult} Let $\piQK=\pi_Q\pi_K\in\Aflat$ where
$\bar K=\pi_K K\in\Weyl$ and $\bar Q=Q\circ \pi_Q\in\Weyl$.  Define
$\Lambda:=(zH)^M\circ (p_0(z))^{-1} \circ \bisp(\bar Q)\circ \bisp(\bar
K)\circ (zH)^{-M}$, then
\begin{gather*}
 \psi\Lambda=\piQK(x)\psi.
\end{gather*}
 \end{Theorem}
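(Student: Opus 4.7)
Substituting $\psi=K\psi_0(zH)^{-M}$ into $\psi\Lambda$, the outer factors $(zH)^M$ and $(zH)^{-M}$ cancel (they are matrix functions of $z$ that commute with each other and with the scalar $(p_0)^{-1}$, and multiply to the identity), giving $\psi\Lambda=K\psi_0\,(p_0)^{-1}\,\bisp(\bar Q)\,\bisp(\bar K)\,(zH)^{-M}$. The anti-isomorphism of Lemma~\ref{lem:bisp} combines the middle $\bisp$ factors: $\bisp(\bar Q)\circ\bisp(\bar K)=\bisp(\bar Q\circ\bar K)=\bisp(Q\piQK K)$. I would treat the target side analogously: since $\piQK K=\pi_Q(\pi_K K)=\pi_Q\bar K\in\Weyl$, Lemma~\ref{lem:bisp} gives $\piQK K\psi_0=\psi_0\bisp(\pi_Q\bar K)$, so $\piQK\psi=\psi_0\bisp(\pi_Q\bar K)(zH)^{-M}$. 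Cancelling $(zH)^{-M}$ on both sides reduces the theorem to the functional identity
\begin{gather*}
K\psi_0\,(p_0(z))^{-1}\,\bisp(\bar Q\bar K)=\psi_0\,\bisp(\pi_Q\bar K).
\end{gather*}

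The key observation is that $Q$ is a differential operator in $x$ while $(p_0)^{-1}$ and $\bisp(\bar Q\bar K)$ act only in $z$ (with matrix coefficients on opposite sides), so applying $Q$ from the left commutes past them. On the left side this gives $(QK\psi_0)(p_0)^{-1}\bisp(\bar Q\bar K)=(L_0\psi_0)(p_0)^{-1}\bisp(\bar Q\bar K)=(p_0\psi_0)(p_0)^{-1}\bisp(\bar Q\bar K)=\psi_0\bisp(\bar Q\bar K)=\bar Q\bar K\psi_0$, using $L_0=QK$, the eigenvalue identity $L_0\psi_0=p_0\psi_0$ (Definition~\ref{def:L0} and Lemma~\ref{lem:bisp}), and the intertwining. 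On the right side it gives $Q\cdot\pi_Q\bar K\psi_0=\bar Q\bar K\psi_0$ since $\bar Q=Q\pi_Q$. So both sides agree after applying $Q$.

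To upgrade this to the actual identity of functions I would extend the anti-isomorphism $\bisp$ to the ring of matrix-coefficient pseudo-differential operators in $x$: because $K$ is monic, $K^{-1}$ exists there, and $L_0^{-1}=K^{-1}Q^{-1}$ corresponds to $(p_0(z))^{-1}I$ under the extended $\bisp$. Then
\begin{gather*}
(p_0)^{-1}\bisp(\bar Q\bar K)=\bisp\big(L_0^{-1}\circ\bar Q\bar K\big)=\bisp\big(K^{-1}\piQK K\big),
\end{gather*}
so the extended intertwining gives $\psi_0\cdot(p_0)^{-1}\bisp(\bar Q\bar K)=K^{-1}\piQK K\psi_0$, and pre-multiplying by $K$ (which commutes with the whole $z$-operator) yields $K\psi_0\,(p_0)^{-1}\bisp(\bar Q\bar K)=\piQK K\psi_0=\psi_0\bisp(\pi_Q\bar K)$. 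Restoring the $(zH)^{-M}$ on the right then produces $\psi\Lambda=\piQK\psi$.

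\textbf{Main obstacle.} The delicate step is handling $(p_0(z))^{-1}$ rigorously inside the composition: as a scalar rational function of $z$ it does not commute with the differential piece of $\bisp(\bar Q\bar K)$, so it cannot simply be slid past. Equivalently, one must promote ``$Q\cdot\text{LHS}=Q\cdot\text{RHS}$'' to ``LHS$=$RHS'' in the presence of $\ker Q$. The cleanest route is the pseudo-differential extension of $\bisp$ sketched above, which requires verifying well-definedness in the non-commutative matrix setting; alternatively one can match asymptotic $z$-expansions on both sides, using that both functions lie in the narrow class $\psi_0\cdot(\text{rational in }z)$ on which $Q$ is injective.
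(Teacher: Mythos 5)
Your overall strategy mirrors the paper's: reduce the eigenvalue equation to an identity between two explicit functions of $(x,z)$, apply the differential operator $Q$ (equivalently $\bar Q$) to both sides and check agreement, then ``undo'' $Q$. Your function-level computations are correct, and the reduction to
$K\psi_0\,(p_0)^{-1}\,\bisp(\bar Q\bar K)=\psi_0\,\bisp(\pi_Q\bar K)$
is a sound reformulation of the claim.

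The gap is in how you propose to undo $Q$. Your primary route---extending $\bisp$ to matrix-coefficient pseudo-differential operators so that $L_0^{-1}$ corresponds to $(p_0(z))^{-1}$---is exactly the device (adjoining formal inverses, in the style of Bakalov--Horozov--Yakimov) that the paper explicitly says cannot be used here without modification: the matrix setting has zero divisors, so the formal-inverse construction is not available. You flag that the extension ``requires verifying well-definedness,'' but that verification is precisely what is missing, and it is not clear it can be carried out. (There is also a more elementary warning sign: $\bisp(K^{-1}\piQK K)$ would need $K$ itself to lie in the domain of $\bisp$, but $K$ generally has non-polynomial rational coefficients in $x$, so it is not in $\Weyl$.) Your second, briefly-stated alternative---injectivity of $Q$ on the class $\psi_0\cdot(\text{rational in }z)$---is the argument the paper actually uses, and you should have developed it rather than the pseudo-differential route. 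Concretely, the paper observes that $\bar Q$ (a right factor of the monic $L_0$) has an invertible leading coefficient, hence finite-dimensional kernel as an operator in $x$; the difference of the two sides, viewed for varying $z$, sweeps out a family of functions of $x$ that cannot all lie in a finite-dimensional kernel unless they vanish identically. Structurally, the paper also organizes the argument slightly differently: it first proves an auxiliary identity $\bisp(\bar K)\circ p_0^{-1}\circ\bisp(\bar Q) = \bisp(\pi_K\pi_Q)$ (as actions on $\psi_0$), obtained by writing $L_0 = \bar Q\circ(\pi_K\pi_Q)^{-1}\circ\bar K$ and pushing operators through with the anti-isomorphism, and writes $\psi = \pi_K^{-1}\psi_0\,\bisp(\bar K)(zH)^{-M}$ so that this $\Lambda_0:=\bisp(\bar K)\,p_0^{-1}\,\bisp(\bar Q)$ appears verbatim inside $\psi\Lambda$. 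This keeps careful track of the orders $\pi_K\pi_Q$ versus $\piQK=\pi_Q\pi_K$, which matters since $\pi_K$ and $\pi_Q$ need not commute; your write-up should also be explicit about that.
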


\begin{proof}
Write $L_0$ as $L_0=\bar Q\circ
(\pi_K(x)\pi_Q(x))^{-1}\circ \bar K$.
Applying this operator to $\psi_0$ and multiplying each side by $p_0^{-1}(z)$ gives
\begin{gather*}
(\bar Q\circ (\pi_K(x)\pi_Q(x))^{-1}\circ \bar K)\psi_0p_0^{-1}(z)=\psi_0.
\end{gather*}
Moving $\bar K$ to the other side using the anti-isomorphism and applying $\flat(\bar Q)$ to both sides this becomes
\begin{gather*}
\big(\bar Q\circ (\pi_K(x)\pi_Q(x))^{-1}\big)\psi_0(\flat (\bar K)\circ p_0^{-1}(z)\circ \flat(\bar Q))=\psi_0\flat (\bar Q).
\end{gather*}
Moving the last expression to the other side of the equality and moving $\flat(\bar Q)$ in it to the other side, we f\/inally get
\begin{gather*}
\bar Q((\pi_K\pi_Q)^{-1}\psi_0\big(\flat (\bar K)\circ p_0^{-1}\circ \flat (\bar Q)\big)-\psi_0)=0.
\end{gather*}
Note that $\bar Q$ is a dif\/ferential operator in $x$ with a non-singular leading coef\/f\/icient (since it is a~factor of the monic dif\/ferential operator $L_0$).  Hence, its kernel is f\/inite-dimensional.  The only way the expression to which it is applied, a polynomial in~$z$ multiplied by~$e^{xzH}$, could be in its kernel for all~$z$ is if it is equal to zero.  From this, we conclude that
\begin{gather*}
\flat(\bar K)\circ p_0^{-1}\circ\flat( \bar Q) = \flat(\pi_K\pi_Q).
\end{gather*}

Using this one f\/inds that the action of $\Lambda$ on $\psi$ is
\begin{gather*}
(\psi)\Lambda = \big(K\psi_0(zH)^{-M}\big)\Lambda=\big(\pi_K^{-1}(x)\bar K
\psi_0(zH)^{-M}\big)\Lambda\\
\hphantom{(\psi)\Lambda}{}
= \big(\pi_K^{-1}(x)(\psi_0)\bisp{\bar
K}(zH)^{-M}\big)\Lambda
   = \big(\pi_K^{-1}(x)\psi_0\big)\bisp{\bar K}\circ
(zH)^{-M}\circ\Lambda \\
\hphantom{(\psi)\Lambda}{}
= \big(\pi_K^{-1}(x)\psi_0\big)\bisp{\bar K}\circ
(p_0(z))^{-1} \circ \bisp(\bar Q)\circ \bisp(\bar K)\circ (zH)^{-M}\\
\hphantom{(\psi)\Lambda}{}
 = \big(\pi_K^{-1}(x)\psi_0\big)\Lambda_0\circ \bisp(\bar K)\circ (zH)^{-M}
 = \big(\pi_K^{-1}(x)\pi_K(x)\pi_Q(x)\psi_0\big)\circ \bisp(\bar K)\circ
(zH)^{-M}
 \\
\hphantom{(\psi)\Lambda}{}
 = (\pi_Q(x)\psi_0)\circ \bisp(\bar K)\circ (zH)^{-M}
 = \big(\pi_Q(x)\pi_K(x)\pi_K^{-1}(x)\psi_0\big)\circ \bisp(\bar K)\circ
(zH)^{-M}
 \\
\hphantom{(\psi)\Lambda}{}
= \piQK(x)(\pi_K^{-1}(x)\psi_0)\bisp(\bar K)\circ
(zH)^{-M}
= \piQK(x)\big(\pi_K^{-1}(x)\bar K\psi_0\big)
(zH)^{-M}
\\
\hphantom{(\psi)\Lambda}{}
=\piQK(x)(K\psi_0)(zH)^{-M}=\piQK(x)\psi. \tag*{\qed}
\end{gather*}
\renewcommand{\qed}{}
\end{proof}

\subsection{Ad-nilpotency}\label{sec:ad}

We now have eigenvalue
equations $ L_p\psi=\psi p(z)$ and $\psi\Lambda=\piQK(x)\psi $. In the
commutative case, Duistermaat and Gr\"unbaum noticed that these
operators and eigenvalues satisf\/ied an interesting relationship that
goes by the name of ``ad-nilpotency'' \cite{DG}.  As usual, for
operators $R$ and $P$ we recursively def\/ine $\ad_R^nP$ by
\begin{gather*}
\ad_R^1P=R\circ P-P\circ R\ \hbox{and}\ \ad_R^nP=R\circ
\big(\ad_R^{n-1}P\big)-\big(\ad_R^{n-1}P\big)\circ R,\qquad n\geq 2.
\end{gather*}
 When $R$ is a dif\/ferential operator of order greater than one,
then one generally expects the order of $\ad_R^nP$ to get large as~$n$
goes to inf\/inity, but they found that when $R$ is
a scalar bispectral dif\/ferential operator and $P$ is the eigenvalue of a
corresponding dif\/ferential operator in the spectral variable, then
surprisingly $\ad_R^nP$ is the zero operator for large enough $n$.
In fact, in that scalar Schr\"odinger operator case they considered, Duistermaat and Gr\"unbaum found that ad-nilpotency was both a necessary and suf\/f\/icient condition for bispectrality \cite{DG}.

As it turns out, in the this new context even the easier of these two statements fails to hold.   In particular, the ad-nilpotency
in the commutative case is a consequence of the fact that the leading
coef\/f\/icients of $R\circ P$ and $P\circ R$ are equal, but this is not
generally true for dif\/ferential operators with matrix coef\/f\/icients.
Consequently, it is \textit{not} the case that ad-nilpotency holds for
all of the bispectral operators produced by the procedure described
above.  The following results (and an example in
Remark~\ref{rem:ad-examp}) show how and to what extent the old result
generalizes.

\begin{Theorem}\label{ad} Given $L_p$, $\Lambda$, $\piQK(x)$ and $p(z)$
as above, the results of applying $\ad_{\piQK}^nL_p$ and
$\ad_{\Lambda}^n p$ on $\psi$ are equal for every $n\in\N$.
\end{Theorem}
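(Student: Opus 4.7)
The plan is to prove $(\ad_{\piQK}^n L_p)\psi = \psi(\ad_\Lambda^n p)$ by induction on $n$, leveraging only the two eigenvalue equations $L_p\psi = \psi p(z)$ and $\psi\Lambda = \piQK(x)\psi$ together with one elementary commutation principle. Set $R_n := \ad_{\piQK}^n L_p$ and $S_n := \ad_\Lambda^n p$; the goal then becomes $R_n\psi = \psi S_n$ for every $n\in\N$. Observe that $R_n$ is built by composition from $L_p$ and (left multiplication by) the matrix function $\piQK(x)$, so it is an operator in $x$ acting from the left, while $S_n$ is similarly an operator in $z$ acting from the right.

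The commutation principle I would isolate first is: any operator $A$ in $x$ acting from the left commutes in its action on a function $f(x,z)$ with any operator $B$ in $z$ acting from the right; that is, $A(fB) = (Af)B$. This is immediate from the conventions of Section~\ref{sec:antiiso}, since the coefficient matrices of $A$ multiply $f$ on the left while those of $B$ multiply $f$ on the right, and $\partial_x$, $\partial_z$, $\Trans{\alpha}$ all act on variables untouched by the opposite operator. Consequently $R_n(fS_m) = (R_n f)S_m$ for all $n,m$ and any $f$.

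For the base case $n=1$, a direct computation using each of the two eigenvalue equations and the commutation principle gives
\begin{gather*}
(\ad_{\piQK} L_p)\psi = \piQK(L_p\psi) - L_p(\piQK\psi) = (\piQK\psi)p - L_p(\psi\Lambda) \\
= (\psi\Lambda)p - (L_p\psi)\Lambda = (\psi\Lambda)p - (\psi p)\Lambda = \psi(\ad_\Lambda p).
\end{gather*}
For the induction step, assume $R_n\psi = \psi S_n$ and expand:
\begin{gather*}
R_{n+1}\psi = \piQK(R_n\psi) - R_n(\piQK\psi) = \piQK(\psi S_n) - R_n(\psi\Lambda) \\
= (\psi\Lambda)S_n - (R_n\psi)\Lambda = (\psi\Lambda)S_n - (\psi S_n)\Lambda = \psi S_{n+1},
\end{gather*}
where the third equality invokes $\piQK\psi = \psi\Lambda$ and pushes $R_n$ across $\Lambda$ via the commutation principle, and the fourth applies the induction hypothesis.

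I do not anticipate any serious obstacle; the argument is essentially formal. The only thing requiring genuine care is the bookkeeping of left versus right multiplication in the non-commutative setting -- $\piQK R_n\psi$ must be parsed as $\piQK\cdot(R_n\psi)$ while $\psi\Lambda S_n$ must be parsed as $(\psi\Lambda)\cdot S_n$, precisely the kind of subtlety flagged in Remark~\ref{rem:eigenchange}. Once the conventions are pinned down, the induction reduces to the interleaved substitutions $L_p\psi\leftrightarrow\psi p$ and $\psi\Lambda\leftrightarrow\piQK\psi$, together with passage of operators in one variable past operators in the other.
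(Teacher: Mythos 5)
Your proof is correct and follows the same route as the paper: establish the $n=1$ case directly from the two eigenvalue equations together with the observation that left-acting operators in $x$ pass through right-acting operators in $z$, then iterate by induction. The paper leaves the inductive step as ``proved similarly''; you have simply written it out explicitly, which is a minor elaboration rather than a different method.
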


\begin{proof} The case $n=1$ is obtained by subtracting
the equations
\begin{gather*}
 \piQK(x)L_p\psi=\piQK(x)\psi p(z)=\psi\Lambda p(z)
 \end{gather*}
from the equations with the orders of the operators reversed
\begin{gather*}
L_p\piQK(x)\psi=L_p\psi \Lambda=\psi p(z)\Lambda.
\end{gather*} If one assumes the
claim is true for $n=k$ then the case $n=k+1$ is proved similarly and
the general case follows by induction.
\end{proof}

So, the equivalence of the actions of the two operators generated by
iterating ``ad'' remains true regardless of non-commutativity. One
cannot conclude from this alone that either of them is zero without
making further assumptions, but if $\ad_{\piQK}^nL=0$ (which would be
the case, for instance, if one could be certain that $\piQK$ would
commute with the coef\/f\/icients) then because $\psi$ is not in the kernel
of any non-zero operator in $z$ alone, the one could conclude the same
is true for the corresponding operator written in terms of $\Lambda$ and
$p$:

\begin{Corollary}\label{cor}
 If $\ad_{\piQK}^nL=0$ then $\ad_{\Lambda}^n
p=0$. \end{Corollary}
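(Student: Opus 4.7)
The plan is to chain together Theorem \ref{ad} with the non-degeneracy of $\psi$ as a generating eigenfunction for operators in $z$. Concretely, I would argue in three short steps.

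First, I would invoke Theorem \ref{ad} to write the key identity
\begin{gather*}
\bigl(\ad_{\piQK}^n L_p\bigr)\psi \;=\; \psi\bigl(\ad_\Lambda^n p\bigr),
\end{gather*}
where the left-hand side is the operator in $x$ applied from the left and the right-hand side is the operator in $z$ applied from the right. This is the content of Theorem \ref{ad}, understood with the standing convention of the paper that operators in $x$ act from the left while operators in $z$ act from the right.

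Second, by the hypothesis $\ad_{\piQK}^n L = 0$, the left-hand side vanishes identically as a function of $(x,z)$. Therefore $\psi\bigl(\ad_\Lambda^n p\bigr) = 0$. I would observe that $\ad_\Lambda^n p$ is obtained by iterated commutators of $\Lambda$ (a matrix differential-translation operator in $z$) with the polynomial $p(z)$ (a multiplication operator in $z$), so it is itself an operator acting in the variable $z$ alone from the right.

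Third, I would use the fact noted just before the statement of the corollary that $\psi$ lies in the kernel of no non-zero operator in $z$ alone. This follows from the asymptotic normalization $\psi(x,z) = (I + O(z^{-1}))e^{xzH}$ with $H$ invertible: any non-trivial finite combination of translation-differential terms applied to $\psi$ from the right would produce a non-zero leading asymptotic term. Combining this with $\psi\bigl(\ad_\Lambda^n p\bigr) = 0$ gives $\ad_\Lambda^n p = 0$, as desired.

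The proof is essentially mechanical once Theorem \ref{ad} is available; the only subtlety worth flagging is the asymmetric interpretation of "the results of applying $\ldots$ on $\psi$ are equal" in Theorem \ref{ad}, which must be read with $L_p$ on the left and $\ad_\Lambda^n p$ on the right. No further calculation is required.
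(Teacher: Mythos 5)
Your proof follows exactly the paper's own argument: invoke Theorem~\ref{ad} to get $\bigl(\ad_{\piQK}^n L_p\bigr)\psi = \psi\bigl(\ad_\Lambda^n p\bigr)$, use the hypothesis to kill the left side, and then apply the fact (noted immediately before the Corollary) that $\psi$ is not in the kernel of any non-zero operator in $z$ alone. Your added remark justifying that last fact via the asymptotics $\psi = (I+O(z^{-1}))e^{xzH}$ is a sensible elaboration of what the paper states without proof.
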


\begin{Remark} This may be the f\/irst time that ad-nilpotency has been
considered for translation operators as well as for matrix operators.
It is therefore relevant to note that Corollary~\ref{cor} is valid even
when the operator $\Lambda$ involves shift operators of the form
$\Trans{\alpha}$  (see Remark~\ref{rem:ad-examp}).
\end{Remark}

\section{Examples}

\subsection{A wave function that is not part of a bispectral triple}\label{nogo}

Consider the case $\bdelta=\operatorname{span}\left\{\delta_1,\delta_2\right\}$,
\begin{gather*}
 H=\left(\begin{matrix}1&0\\0&-1\end{matrix}\right), \qquad
\delta_1=\left(\begin{matrix}\Delta_{1,0}\\\Delta_{0,0}\end{matrix}\right), \qquad
\text{and} \qquad
\delta_2=\left(\begin{matrix}0\\\Delta_{1,0}\end{matrix}\right).
\end{gather*}
 Then Assumption~\ref{assumpA} is met and
 \begin{gather*}
\psi(x,z)=\left(\begin{matrix}\dfrac{xz-1}{xz}e^{xz}&0\vspace{1mm}\\
\dfrac{1}{x^2z}e^{xz}&\dfrac{xz+1}{xz}e^{-xz}\end{matrix}\right).
\end{gather*}

The matrix polynomial $ p(z)=z^2I$ satisf\/ies $p\circ \delta_i=0$ for
$i=1$ and $i=2$ and so $p\in\A$ and as predicted by
Theorem~\ref{claimLp}, the corresponding operator
\begin{gather*}
L_p=\left(\begin{matrix}\partial^2-\dfrac{2}{x^2}&0\vspace{1mm}\\
\dfrac{4}{x^3}&\partial^2-\dfrac{2}{x^2}\end{matrix}\right)
\end{gather*} satisf\/ies
$L_p\psi=\psi p$.

Note that this is a rational Darboux transformation of the f\/irst operator and eigenfunction from the bispectral triple $(\partial^2,\partial_z^2,e^{xzH})$.  Since in the scalar case it has been found that rational Darboux transformations preserve bispectrality, one might expect~$L_p$ and~$\psi$ to be part of a~bi\-spectral triple.
However, $H\delta_i\not\in \bdelta$ and so Assumption~\ref{assumpB} is
not satisf\/ied.  Thus, Theorem~\ref{mainresult} does not guarantee the
existence of a dif\/ferential-translation operator $\Lambda$ in $z$ having
$\psi$ as an eigenfunction. In fact, in this simple case we can see that no such operator
exists.

Brief\/ly, the argument is as follows.  Consider a
dif\/ferential-translation operator $\Lambda$ acting on $\psi$ from the
right and suppose there is a function $\pi(x)$ such that
$\psi\Lambda=\pi(x)\psi$.  By noting the coef\/f\/icients of $e^{xz}$ and
$e^{-xz}$ in the top right entry of each side of this equality, we see
immediately that $\Lambda_{12}=\pi_{12}=0$ (i.e., $\Lambda$ and $\pi$
are both lower triangular.).  Then,we have the scalar eigenvalue
equation $\psi_{11}\Lambda_{11}=\pi_{11}\psi_{11}$ and also
$\psi_{21}\Lambda_{11}=\pi_{21}\psi_{11}+\pi_{22}\psi_{21}$.  Combining
these with  $\psi_{11}+x\psi_{21}=e^{xz}$ one concludes that
$e^{xz}\Lambda_{11}$ can be written in the form $(f(x)+g(x)/z)e^{xz}$.
It follows that $\Lambda_{11}$ as an operator in $z$ has only
coef\/f\/icients that are constant or are a constant multiplied by $1/z$.
In fact, all of the operators in $z$ having $\psi_{11}$ as eigenfunction
are known; they are the operators that intertwine by
$\partial-\frac{1}{z}$ with a constant coef\/f\/icient operator.  The only
ones that meet both the criteria of the previous two sentences are the
order zero operators.  In other words, $\Lambda_{11}$ would have to be a
number.  Then the equation for the action of $\Lambda_{11}$ on
$\psi_{21}$ tells us that $\pi_{21}=0$ and $\pi_{22}=\pi_{11}$ is also a
number.  Since the eigenvalue of this operator $\Lambda$ does not depend
on $x$, the operator does not form a bispectral triple with $L$ and
$\psi$.

\subsection[A rational example with non-diagonalizable $H$]{A rational example with non-diagonalizable $\boldsymbol{H}$}

Consider $H=I+U$,  where
\begin{gather*}
U=\left(\begin{matrix}0&1\\0&0\end{matrix}\right), \qquad \text{so
that}\qquad \psi_0=e^{xzH}=\left(\begin{matrix}e^{xz}&ze^{xz}\\
0&e^{xz}\end{matrix}\right).
\end{gather*} The distributions chosen are
$\bdelta=\operatorname{span}\{\delta_1,\dots,\delta_4\}$ with
\begin{gather*}
\delta_1=\left(\begin{matrix}\Delta_{2,1}\\ 0\end{matrix}\right), \qquad
\delta_2=\left(\begin{matrix}0\\ \Delta_{2,1}\end{matrix}\right), \qquad
\delta_3=\left(\begin{matrix}\Delta_{0,1}\\ 0\end{matrix}\right), \qquad
\text{and}\qquad
\delta_4=\left(\begin{matrix}0\\\Delta_{0,1}\end{matrix}\right).
\end{gather*}
Assumption~\ref{assumpA} is met and the unique monic operator of order
$2$ satisfying $K(\phi_i)=0$ where $\phi_i=(\psi_0)\delta_i$ for $1\leq
i\leq 4$ is
\begin{gather*}
 K=\partial^2 I +\left(\begin{matrix}\dfrac{-2 x-1}{x} & -2
\\ 0 & \dfrac{-2 x-1}{x} \end{matrix}\right)\partial
+\left(\begin{matrix} \dfrac{x+1}{x} & \dfrac{2 x^2+x}{x^2} \vspace{1mm}\\ 0 &
\dfrac{x+1}{x} \end{matrix}\right).
\end{gather*}
 Now  def\/ine
 \begin{gather*}
\psi=K(\psi_0)(zH)^{-2}=\left(\begin{matrix} \dfrac{x z^2-2 x z-z+x+1}{x
z^2} & \dfrac{z-1}{x z^2} \vspace{1mm}\\ 0 & \dfrac{x z^2-2 x z-z+x+1}{x z^2}
 \end{matrix}\right)e^{xzH}.
\end{gather*}

Assumption~\ref{assumpB} is met because $H\delta_i=\delta_i$,
$H\delta_{i+1}=\delta_i+\delta_{i+1}$ for $i=1,3$.  Consequently, it
will be possible to produce operators in $z$ sharing the eigenfunction
$\psi$. The matrix polynomial
\begin{gather*}
p(z)=\left(\begin{matrix}(z-1)^2&(z-1)^3\\
0&(z-1)^2\end{matrix}\right)=(z-1)^2I+(z-1)^3U
\end{gather*} is in $\A$ because
$p\circ\delta_i=2\delta_{i+2}$ and $p\circ\delta_{i+2}=0$ for $i=1,2$.
So,
\begin{gather*}
\bisp^{-1}(p(z))=\left(\begin{matrix}\partial^2-2\partial+1&\partial^3-5
\partial^2+5\partial-1\\ 0&\partial^3-2\partial+1\end{matrix}\right)
\end{gather*}
satisf\/ies the intertwining relationship
\begin{gather*}
 K\circ
\bisp^{-1}(p(z))=L_p\circ K
\end{gather*}
 with
 \begin{gather*}
L_p=\left(\begin{matrix}0&1\cr0&0\end{matrix}\right)\partial^3+
\left(\begin{matrix}1 & -5 \\ 0 & 1  \end{matrix}\right)\partial^2+
\left(\begin{matrix}  -2 & 5-\dfrac{3}{x^2} \vspace{1mm}\\ 0 & -2
\end{matrix}\right)\partial+ \left(\begin{matrix}1-\dfrac{2}{x^2} &
-1+\dfrac{7}{x^2}+\dfrac{3}{x^3} \vspace{1mm}\\ 0 & 1-\dfrac{2}{x^2}
\end{matrix}\right).
\end{gather*} This operator has eigenfunction $\psi$ with
eigenvalue $p$:
\begin{gather*}
 L_p\psi=L_p\circ K \psi_0=K\circ
L_0\psi_0=K\psi_0p(z)=\psi p(z).
\end{gather*} (As mentioned in
Remark~\ref{rem:eigenchange}, multiplying $\psi$ by a matrix on the
right has the ef\/fect of conjugating the eigenvalue.  It is therefore
worth noting that the eigenvalue~$p$ here is not only non-diagonal but
in fact non-diagonalizable.)

To produce an operator in $z$ sharing $\psi$ as eigenfunction, we
consider the polynomial $ p_0(z)=(z-1)^3 $ (which has this form because
$\lambda=1$ is the only point in the support of the distributions in
$\bdelta$ and because the highest derivative they take there is
$m_{\lambda}=2$).  Then
\begin{gather*}
L_0=\bisp^{-1}(p_0(z)I)=\big(\partial^3-3\partial^2+3\partial-1\big)I+\big({-}3\partial^3+6\partial^2-3\partial\big)U
\end{gather*}
 is the operator satisfying
$L_0\psi_0=\psi_0p_0(z)$ and $L_0$ factors as $L_0=Q\circ K$ with
\begin{gather*}
Q=\left(\begin{matrix}\partial+\dfrac{1-x}{x}&-3\partial+\dfrac{2x-3}{x}\vspace{1mm}\\
0&\partial+\dfrac{1-x}{x}\end{matrix}\right).
\end{gather*}

The next step is to  choose two functions from $\Cent[x]$, so that $\bar
K=\pi_K K$ and $\bar Q=Q\circ \pi_Q$ are in $\C[x,\partial]$.  It turns
out that the selection
\begin{gather*}
\pi_K=\left(\begin{matrix}x&x^2\\0&x\end{matrix}\right)= \pi_Q
\end{gather*}
works
and we get that
\begin{gather*}
 \Lambda=(zH)^2\circ(p_0(z))^{-1}\circ \bisp(\bar
Q)\circ \bisp(\bar K)\circ (zH)^{-2} = \partial_z^3 I +
\partial_z^2\left(\begin{matrix} 1 & -\dfrac{2
\left(z^2-z+6\right)}{(z-1) z} \\ 0 & 1 \end{matrix}\right) \\
\hphantom{\Lambda=}{} +
\partial_z\left(\begin{matrix}  \dfrac{4}{z-z^2} & \dfrac{2 \left(z^2+8
z+6\right)}{(z-1)^2 z^2} \vspace{1mm}\\ 0 & \dfrac{4}{z-z^2} \end{matrix}\right) +
\left(\begin{matrix}\dfrac{-2 z^2+4 z+2}{(z-1)^2 z^2} & \dfrac{4 z^3-6
z^2-8 z-20}{(z-1)^3   z^2} \vspace{1mm}\\ 0 & \dfrac{-2 z^2+4 z+2}{(z-1)^2 z^2}
\end{matrix}\right)
\end{gather*} does indeed satisfy
$\psi\Lambda=\pi_Q(x)\pi_K(x)\psi$.

\subsection[Computing $\Trans{\alpha}^H$ for non-diagonal $H$]{Computing $\boldsymbol{\Trans{\alpha}^H}$ for non-diagonal $\boldsymbol{H}$}\label{TransHexamp}

Suppose $ H=\lambda I +U $ where $U$ is still the upper-triangular
matrix from the previous example. Since
\begin{gather*}
\hbox{exp}\big(xH^{-1}\big)=e^{x/\lambda}I-\frac{x}{\lambda^2}e^{x/\lambda}U,
\end{gather*}
the corresponding operator formed by replacing $e^{ x /\lambda}$ with
$\Trans{\alpha /\lambda}$ and other occurrences of $x$ with~$\alpha\partial_z$ would be
\begin{gather*}
\Trans{\alpha}^H=\Trans{\alpha/\lambda}I-\frac{1}{\lambda^2}
\Trans{\alpha/\lambda}\alpha\partial_z U.
\end{gather*} This operator in $z$ that
combines dif\/ferentiation and translation has the property that
$(e^{xzH})\Trans{\alpha}^H$ $=e^{\alpha x}e^{xzH}$.

\subsection[An exponential example with $H=I$]{An exponential example with $\boldsymbol{H=I}$}

Finally, consider  $H=I$ and
\begin{gather*} \bdelta=\operatorname{span}\left\{
\left(\begin{matrix} \Delta_{0,0}+\Delta_{0,1}\\\Delta_{0,1}
\end{matrix}\right), \left(\begin{matrix}0\\ -\Delta_{0,0}+\Delta_{0,1}
\end{matrix}\right)\right\} .
\end{gather*}
 Then
 \begin{gather*}
K=\left(\begin{matrix}\partial-\dfrac{e^x}{1+e^x}&0\vspace{1mm}\\
\dfrac{e^{x}}{e^{2x}-1}&\partial+\dfrac{e^x}{1-e^x}\end{matrix}\right)
\qquad\text{and}\\
 \psi(x,z)=\left(\begin{matrix} \dfrac{e^{x z}
\left(e^x (z-1)+z\right)}{\left(1+e^x\right) z} & 0 \vspace{1mm}\\ \dfrac{e^{z
x+x}}{\left(-1+e^{2 x}\right) z} & \dfrac{e^{x z} \left(e^x
(z-1)-z\right)}{\left(-1+e^x\right) z} \end{matrix}\right).
\end{gather*}

Because of the support of the distributions, $p_0(z)=z^2-z$ and we may
choose $p\in\A$ to be a multiple of that by a constant matrix
\begin{gather*}
p(z)=\left(\begin{matrix}0&z^2-z\\ z^2-z&0\end{matrix}\right).
\end{gather*}
 The
constant operator $\bisp^{-1}(p)=p(\partial)$, satisf\/ies the
intertwining relationship $K\circ p(\partial)=L_p\circ K$ with
\begin{gather*}
L_p =  \left( \begin{matrix} -\dfrac{e^x \left(-1+e^x-2
e^{2 x}\right)}{\left(-1+e^{2 x}\right)^2} & -\dfrac{2 e^{2
x}}{\left(-1+e^x\right)^2 \left(1+e^x\right)} \vspace{1mm}\\ \dfrac{e^{2 x}
\left(-3+2 e^x\right)}{\left(-1+e^{2 x}\right)^2} &
\dfrac{e^x}{\left(-1+e^x\right)^2 \left(1+e^x\right)}   \end{matrix}
\right)\\
\hphantom{L_p =}{}
+ \left( \begin{matrix} -\dfrac{e^x}{-1+e^{2 x}} &
\dfrac{-1-e^x+3 e^{2 x}-e^{3 x}}{\left(-1+e^x\right)^2
\left(1+e^x\right)} \vspace{1mm}\\ \dfrac{-1+2 e^x+2 e^{2 x}-2 e^{3 x}-e^{4
x}}{\left(-1+e^{2 x}\right)^2} & \dfrac{e^x}{\left(-1+e^x\right)
\left(1+e^x\right)}   \end{matrix} \right)\partial+ \left(
\begin{matrix} 0 & 1 \\ 1 & 0 \\ \end{matrix}\right)\partial_x^2.
\end{gather*} Hence $ L_p\psi=\psi p(z). $ (Since $p(z)$ and $\psi$ do
not commute, it is necessary to write the eigenvalue on the right rather
than the left: $L\psi\not=p(z)\psi$.)

 Now, to compute an operator in $z$ having the same function $\psi$ as
 an eigenfunction, we factor $p_0(\partial)I$ as $Q\circ K$ to obtain
 \begin{gather*}
 Q=\partial I + \left( \begin{matrix} -\frac{1}{1+e^x} & 0 \\
 \frac{e^x}{1-e^{2 x}} & \frac{1}{-1+e^x}   \end{matrix} \right).
 \end{gather*}
 Letting $\pi_K$ and $\pi_Q$ be
 \begin{gather*} \pi_K=\left( \begin{matrix} e^{2 x}-1
 & 0 \\ 0 & 1-e^{2 x}   \end{matrix} \right), \qquad \pi_Q=\left(
 \begin{matrix} e^{2 x}-1 & 0 \\ 0 & e^{2 x} -1  \end{matrix} \right)
\end{gather*} we get
\begin{gather*} \bar K = \pi_KK=\left( \begin{matrix} -e^x
 \left(-1+e^x\right) & 0 \\ -e^x & e^x+e^{2 x}   \end{matrix} \right)+
 \left( \begin{matrix} \left(-1+e^x\right) \left(1+e^x\right) & 0 \\ 0 &
 1-e^{2 x}   \end{matrix} \right)\partial \in\Weyl
 \end{gather*}
  and
  \begin{gather*} \bar Q =
 Q\circ \pi_Q = \left( \begin{matrix} 1-e^x+2 e^{2 x} & 0 \\ -e^x &
 1+e^x+2 e^{2 x}   \end{matrix} \right)+ \left( \begin{matrix} -1+e^{2
 x} & 0 \\ 0 & -1+e^{2 x}   \end{matrix} \right)\partial\in\Weyl.
 \end{gather*}

Replacing $e^{\alpha x}$ by $\Trans{\alpha}^H=\Trans{\alpha}$ and
$\partial$ by $z$ we get the corresponding translation operators
\begin{gather*}
\bisp(\bar K)=\left( \begin{matrix} -z & 0 \\ 0 & z   \end{matrix}
\right)+\Trans1 \left( \begin{matrix} 1 & 0 \\ -1 & 1   \end{matrix}
\right)+\Trans2 \left( \begin{matrix} z-1 & 0 \\ 0 & 1-z   \end{matrix}
\right)
\end{gather*} and
\begin{gather*}
 \bisp(\bar Q)=\left( \begin{matrix} 1-z & 0 \\ 0 & 1-z
  \end{matrix} \right)+\Trans1 \left( \begin{matrix} -1 & 0 \\ -1 & 1
  \end{matrix} \right)+\Trans2 \left( \begin{matrix} z+2 & 0 \\ 0 & z+2
  \end{matrix} \right).
  \end{gather*}  As predicted by Theorem~\ref{mainresult},
\begin{gather*}
 \Lambda = \frac{z}{p_0(z)}\circ \bisp(\bar Q)\circ
\bisp(\bar K)\circ \frac{1}{z}   =  \left( \begin{matrix} \dfrac{z^3+5
z^2+6 z}{z (z+2) (z+3)} & 0 \vspace{1mm}\\ 0 & \dfrac{-z^3-5 z^2-6 z}{z (z+2) (z+3)}
  \end{matrix} \right)\\
 \hphantom{\Lambda =}{}
  +\Trans1 \left( \begin{matrix} 0 & 0 \vspace{1mm}\\ \dfrac{2
z^2+6 z+4}{z (z+1) (z+2)} & 0   \end{matrix} \right)
+\Trans2 \left(
\begin{matrix} \dfrac{-2 z^3-10 z^2-12 z}{z (z+2) (z+3)} & 0 \vspace{1mm}\\ \dfrac{-2
z-4}{z (z+1) (z+2)} & \dfrac{2 z^3+10 z^2+12 z}{z (z+2) (z+3)}
\end{matrix} \right)\\
 \hphantom{\Lambda =}{}
+\Trans3 \left( \begin{matrix} \dfrac{4 z+12}{z (z+2)
(z+3)} & 0 \vspace{1mm}\\ \dfrac{-2 z^2-4 z-2}{z (z+1) (z+2)} & \dfrac{4 z+12}{z (z+2)
(z+3)} \\ \end{matrix} \right)\\
\hphantom{\Lambda =}{}
+\Trans4 \left( \begin{matrix}
\dfrac{z^3+5 z^2+2 z-8}{z (z+2) (z+3)} & 0 \vspace{1mm}\\ 0 & \dfrac{-z^3-5 z^2-2
z+8}{z (z+2) (z+3)}   \end{matrix} \right)
\end{gather*} satisf\/ies
$\psi\Lambda=\piQK\psi$ ($\piQK=\pi_Q\pi_K$).

\begin{Remark}\label{rem:ad-examp} This is a good example for demonstrating
ad-nilpotency and how it has changed in this non-commutative context.
Using $L$, $\Lambda$, $p(z)$ and $\pi_K$, $\pi_Q$ as above, it is indeed
true that
\begin{gather*} \ad_{\piQK}^nL\psi=\psi\ad_{\Lambda}^np
\end{gather*} (with the
operator in $z$ acting from the right as usual).  However, contrary to
our expectation from the commutative case in which the order of the
operator on the left decreases to zero when $n$ gets large,
$\ad_{\piQK}^nL$ is an operator of order~$2$ for every~$n$. This happens
because the  action of iterating~$\ad_{\piQK}$ on the leading
coef\/f\/icient of~$L$  itself is not nilpotent. On the other hand, if
instead of~$\pi_K$ and~$\pi_Q$ we had chosen
\begin{gather*}
\pi_K^*=\pi_Q^*=\big(e^{2x}-1\big)I,\qquad \piQK^*=\pi_K^*\pi_Q^*=\big(e^{2x}-1\big)^2I
\end{gather*} (intentionally selected so as to commute with the coef\/f\/icients of any
operator) then  Theorem~\ref{mainresult} would have produced a dif\/ferent
operator $\Lambda^*$ satisfying $\piQK^*\psi=\psi\Lambda^*$.  In this
case, because the order is lowered at each iteration,
$\ad_{\piQK^*}^3L=0$.  This is not particularly surprising or special as
the same could be said if $\piQK^*$ were replaced by any function of the
form $f(x)I$. However, because of the correspondence in
Theorem~\ref{ad}, we can conclude from this that
$\ad_{\Lambda^*}^3p(z)=0$ also. That is more interesting because in
general repeatedly taking the commutator of the operator~$\Lambda^*$
with some function will not produce the zero operator, even if all
coef\/f\/icients are assumed to commute. Nevertheless, with these specif\/ic
choices everything cancels out leaving exactly zero when
$\ad_{\Lambda^*}$ is applied   three  times to the  function $p(z)$ that
is the eigenvalue above.
\end{Remark}

\section{Conclusions and remarks}

Given a choice of an invertible $N\times N$ matrix $H$ and
$MN$-dimensional space $\bdelta$ of vector-valued f\/initely supported
distributions, this paper sought to produce a bispectral triple
$(L,\Lambda,\psi)$ where $L$ is a dif\/ferential operator acting from the
left, $\Lambda$ is a dif\/ferential-translation operator acting from the
right and $\psi$ is a common eigenfunction that is asymptotically of the
form $e^{xzH}$ satisfying the ``conditions'' generated by the
distributions.  In the scalar case, this was achieved in~\cite{BispSol}
for \textit{any} choice of $\bdelta$.  In the matrix generalization
above, however, the construction only works given
Assumptions~\ref{assumpA} and \ref{assumpB}. The bispectral triples
produced given those two assumptions include many new examples both in
the form of the eigenfunction (asymptotically equal to~$e^{xzH}$) and
the fact that the matrix-coef\/f\/icient operator~$\Lambda$ may involve
translation in $z$ as well as dif\/ferentiation in $z$.  More importantly,
this investigation yielded some observations that may be useful in
future studies of bispectrality in a non-commutative context.

This paper sought  to develop a general construction of bispectral
triples with matrix-valued eigenvalues  and also to understand what
obstructions there might be to generalizing the construction from
\cite{BispSol} to the matrix case.  It is interesting to note that these
seemingly separate goals both turn out to depend on the
non-commutativity of the ring~$\A$ of functions that stabilize the point
in the Grassmannian. Since~$\A$ is also the ring of eigenvalues for the
operators in~$x$, it is not a surprise that by letting its elements be
matrix-valued gives us a non-commutative ring $\{L_p\colon p\in\A\}$ of
operators sharing the eigenfunction $\psi$.  It was not clear at f\/irst
that the ring $\A$ would also be the source of the obstruction to
producing bispectral triples.  However, the construction of operators in
$z$ sharing $\psi$ as eigenfunction uses the assumption that~$H$ is an
element of $\A$.  (Specif\/ically, this is used in the proof of
Lemma~\ref{KcommuteswithH}.)  It is interesting to note that this
\textit{also} depends on the fact that we considered a matrix-valued
stabilizer ring.

Section~\ref{sec:ad} explored the extent to which the property of
ad-nilpotency, which has been a~feature of papers on the bispectral
problem since it was f\/irst noted in~\cite{DG}, continues to apply in the
case of matrix coef\/f\/icient operators.  It is still the case that the
operator formed by iterating the adjoint of one of the operators in a
bispectral triple on the eigenvalue of the other operator has the same
action on the eigenfunction as the operator formed by iterating the
adjoint action of its eigenvalue on the other operator.  However, unlike
the scalar case, only if additional assumptions about the coef\/f\/icients
of the operators are met will either of those be zero for a~large enough
iterations.

It was previously observed \cite{BGK,BL,WilsonNotes} (see also \cite{matrix3, GPT1,GPT2,GPT3,GPT4}) that requiring the operators in
$x$ and $z$ to act on the eigenfunctions from opposite side resulted in
a form of bispectrality whose structure and applications more closely
resembled that in the scalar case.  Because generalizing the scalar
results of \cite{BispSol} necessitated also requiring that the
eigenvalues and f\/initely-supported distributions act from the same side
as the operators acting in the same variables, this previous observation
can now be extended to those other objects as well.

This is the f\/irst time that the bispectral anti-isomorphism was used
to construct bispectral operators in a context involving operators
with matrix coef\/f\/icients, see section~\ref{sec:antiiso} (cf.~\cite{GHY}).  Some
modif\/ications were necessary since the rather general construction in~\cite{BHY} assumed that there were no zero-divisors so that formal
inverses could be introduced.  In addition, the use of the method here
depended on the convention of considering operators in~$x$ and~$z$ to
be acting from opposite sides and required that the coef\/f\/icients of
the operators on which it acted were taken from the centralizer of~$H$.  The most interesting dif\/ference may have been that because the
operators in~$z$ are acting from the right rather than the left, the
map actually \textit{preserves} the order of a product.

Unlike the scalar case, not every choice of distributions $\bdelta$
corresponds to a bispectral triple.  For instance, if
Assumption~\ref{assumpA} fails to be met then there simply is no wave
function $\psi$ satis\-fying the conditions.  More interestingly, if
Assumption~\ref{assumpA} is met but Assumption~\ref{assumpB} is not then
there is a wave function that is an eigenfunction for a ring of
dif\/ferential operators in~$x$, but Theorem~\ref{mainresult} does not
produce a corresponding operator in~$z$.  In fact, as Section~\ref{nogo}
demonstrates, in at least some cases there actually is no bi\-spectral
triple of the form considered above which includes that wave function.  This is very dif\/ferent than the scalar case. Unfortunately, this
paper does not entirely answer the question of which choices of matrix~$H$ and distributions~$\bdelta$ produce a wave function $\psi$ that is
part of a bi\-spectral triple of the type considered here.  In particular,
this paper does not show or claim that there \textit{cannot be} a~dif\/ferential-dif\/ference operator~$\Lambda$ in~$z$ having~$\psi$ as an
eigenfunction with $x$-dependent eigenvalue when~$H\not\in\A$.

Arguably, some of the non-commutativity involved in the construction
above was ``artif\/icially inserted'' in the form of the choice of the
matrix $H$.  If one chooses to consider only the case $H=I$, then
Assumption~\ref{assumpB} is automatically met and Theorems~\ref{claimLp}
and~\ref{mainresult} produce a~bi\-spectral triple for any~$\bdelta$
satisfying Assumption~\ref{assumpA}.  Since any operator $L_p$ produced
by the construction above for some choice of~$H$ \textit{can} be
produced using a dif\/ferent choice of distributions but with $H=I$, one
may conclude that each of these operators is part of a bispectral
triple.  (In other words, the obstruction to bi\-spectrality that is
visible when one seeks a bi\-spectral triple for a given wave function~$\psi$ disappears if one instead focuses on the operator~$L_p$ and seeks
a corresponding bispectral triple.)
However, it seems plausible that some examples of bispectrality to be considered in the future will involve vacuum eigenfunctions that are necessarily non-commutative (unlike these examples in which the non-commutativity of the vacuum eigenfunctions can always be eliminated through a change of variables), and the observations and results above will prove useful in those contexts.

\subsection*{Acknowledgements}

The author thanks the College of Charleston for
the sabbatical during which this work was completed, Maarten Bergvelt
and Michael Gekhtman for mathematical assistance as well as serving as
gracious hosts, Chunxia Li for carefully reading and commenting on
early drafts, and the referees for their advices.

\pdfbookmark[1]{References}{ref}
\LastPageEnding

\end{document}